\newtheorem{fact}{Fact}
\newcommand{\ie}{\textit{i.e.,~}}
\newcommand{\eg}{\textit{e.g.,~}}
\newcommand{\resp}{\textit{resp.~}}
\newcommand*{\figref}[1]{\hyperref[#1]{\mbox{Figure}~\ref*{#1}}}
\newcommand*{\secref}[1]{\hyperref[#1]{\mbox{Section}~\ref*{#1}}}
\newcommand*{\lemref}[1]{\hyperref[#1]{\mbox{Lemma}~\ref*{#1}}}
\newcommand*{\thmref}[1]{\hyperref[#1]{\mbox{Theorem}~\ref*{#1}}}
\newcommand*{\propref}[1]{\hyperref[#1]{\mbox{Proposition}~\ref*{#1}}}
\newcommand*{\defref}[1]{\hyperref[#1]{\mbox{Definition}~\ref*{#1}}}
\newcommand*{\exref}[1]{\hyperref[#1]{\mbox{Example}~\ref*{#1}}}
\newcommand*{\appref}[1]{\hyperref[#1]{\mbox{Appendix}~\ref*{#1}}}
\newcommand*{\eqnref}[1]{\hyperref[#1]{\mbox{Equation}~\ref*{#1}}}
\newcommand*{\factref}[1]{\hyperref[#1]{\mbox{Fact}~\ref*{#1}}}
\newcommand{\nat}{\mathbb{N}}
\newcommand{\ints}{\mathbb{Z}}
\newcommand{\interv}[2]{{[{#1},{#2}]}}
\newcommand{\isdef}{\stackrel{\scriptscriptstyle{\mathsf{def}}}{=}}
\newcommand{\iffdef}{\stackrel{\scriptscriptstyle{\mathsf{def}}}{\iff}}
\newcommand{\set}[1]{{\{ {#1} \}}}
\newcommand{\pow}[1]{{\mathcal{P}({#1})}}
\newcommand{\cardof}[1]{|\!|{#1}|\!|}
\newcommand{\arrow}[2]{\xrightarrow{{\scriptscriptstyle #1}}_{{\scriptscriptstyle #2}}}
\renewcommand{\vec}[1]{\mathbf{#1}}
\newcommand{\proj}[2]{{#1}\!\downharpoonleft_{#2}}
\newcommand{\fnid}{\mathrm{id}}
\newcommand{\sem}[1]{[\![{#1}]\!]}
\newcommand{\mso}{$\mathsf{MSO}$}
\newcommand{\anet}{\mathsf{N}}
\newcommand{\amarkednet}{\mathcal{N}}
\newcommand{\places}{\mathsf{Q}}
\newcommand{\placeof}[1]{\places_{#1}}
\newcommand{\trans}{\mathsf{T}}
\newcommand{\transof}[1]{\trans_{#1}}
\newcommand{\weight}{\mathsf{W}}
\newcommand{\weightof}[1]{\weight_{#1}}
\newcommand{\initmark}{\mathrm{init}}
\newcommand{\initmarkof}[1]{\initmark_{#1}}
\newcommand{\pre}[1]{\text{$\leftidx{^\bullet}{\text{${#1}$}}$}}
\newcommand{\post}[1]{\text{${#1}$}^\bullet}
\newcommand{\prepost}[1]{\leftidx{^\bullet}{\!\text{${#1}$}}^\bullet}
\newcommand{\amark}{\mathsf{m}}
\newcommand{\fire}[1]{\stackrel{\scriptscriptstyle{#1}}{\leadsto}}
\newcommand{\ptype}{\mathsf{p}}
\newcommand{\ptypeof}[1]{\mathrm{ptype}_{#1}}
\newcommand{\portof}[1]{\mathrm{port}_{#1}}
\newcommand{\procof}[1]{\mathrm{proc}_{#1}}
\newcommand{\ptypes}{\mathbb{P}}
\newcommand{\obstransof}[1]{\trans^\mathit{obs}_{#1}}
\newcommand{\inttransof}[1]{\trans^\mathit{int}_{#1}}
\newcommand{\graph}{\mathsf{G}}
\newcommand{\verts}{\mathsf{V}}
\newcommand{\vertof}[1]{\verts_{#1}}
\newcommand{\edges}{\mathsf{E}}
\newcommand{\edgeof}[1]{\edges_{#1}}
\newcommand{\labels}{\lambda}
\newcommand{\labof}[1]{\labels_{#1}}
\newcommand{\asys}{\mathsf{S}}
\newcommand{\systems}[1]{\mathcal{S}({#1})}
\newcommand{\valpha}{\Lambda}
\newcommand{\vlab}{\lambda}
\newcommand{\palpha}{\Pi}
\newcommand{\expalpha}{\mathcal{E}}
\newcommand{\plab}{\pi}
\newcommand{\plabs}{\mathbf{p}}
\newcommand{\ealpha}{\Delta}
\newcommand{\elab}{\delta}
\newcommand{\abeh}{\beta}
\newcommand{\behof}[1]{\abeh({#1})}
\newcommand{\graphs}[2]{\mathcal{G}\ifthenelse{\equal{#1}{}\AND\equal{#2}{}}{}{({#1},{#2})}}
\newcommand{\hrgraphs}[2]{\mathcal{G}^{\scriptscriptstyle{\hr}}\ifthenelse{\equal{#1}{}\AND\equal{#2}{}}{}{({#1},{#2})}}
\newcommand{\epsgraphs}[2]{\mathcal{G}^{\epsilon}\ifthenelse{\equal{#1}{}\AND\equal{#2}{}}{}{({#1},{#2})}}
\newcommand{\expof}[1]{\mathrm{exp}({#1})}
\newcommand{\hr}{\mathsf{HR}}
\newcommand{\hrtext}{\textsf{HR}}
\newcommand{\vr}{\mathsf{VR}}
\newcommand{\vrtext}{\textsf{VR}}
\newcommand{\hrpop}[2]{\parallel_{\ifx#1\empty\else{\scriptscriptstyle{{#1},{#2}}}\fi}}
\newcommand{\bighrpop}[1]{\bigparallel_{#1}}
\newcommand{\edge}[3]{\overrightarrow{#1}_{{#2},{#3}}}
\newcommand{\vrpop}[2]{\oplus_{\ifx#1\empty\else{\scriptscriptstyle{{#1},{#2}}}\fi}}
\newcommand{\bigvrpop}[1]{\bigoplus_{\scriptscriptstyle{{#1}}}}
\newcommand{\svertex}[1]{\bullet{#1}}
\newcommand{\addedge}[3]{\mathsf{add}_{{#1}\ifx#2\empty\else{,{#2},{#3}}\fi}}
\newcommand{\relab}[1]{\mathsf{relab}_{#1}}
\newcommand{\grammar}{\Gamma}
\newcommand{\rules}{\Pi}
\newcommand{\nonterm}{\Xi}
\newcommand{\step}[1]{\Rightarrow_{#1}}
\newcommand{\aop}{\mathsf{op}}
\newcommand{\alangof}[2]{\mathcal{L}_{#1}({#2})}
\newcommand{\langu}{\mathcal{L}}
\newcommand{\val}[2]{\mathrm{val}^{\scriptscriptstyle{#1}}\ifthenelse{\equal{#2}{}}{}{({#2})}}
\newcommand{\halve}[1]{{#1}_{\scriptscriptstyle{\nicefrac{1}{2}}}}
\newcommand{\expand}{\mathcal{H}} % TODO: RENAME
\newcommand{\ctlstarnext}{$\mathsf{CTL}^*\setminus\bigcirc$} 
\newcommand{\atoms}{\mathbb{A}}
\newcommand{\atomsof}[2]{\mathit{Atoms}^{#2}({#1})}
\newcommand{\pathsof}[1]{\mathit{Paths}({#1})}
\newcommand{\vars}{\mathbb{X}}
\newcommand{\varlab}{\ell}
\newcommand{\syslab}{\mathsf{L}}
\newcommand{\stutter}{\trianglelefteq}
\newcommand{\sameap}{\approx}
\newcommand{\gtidle}{\mathsf{idle}}
\newcommand{\gtactive}{\mathsf{active}}
\newcommand{\gtwait}{\mathsf{wait}}
\newcommand{\gtreply}{\mathsf{reply}}
\newcommand{\gttry}[1]{{\mathsf{try}}_{#1}}
\newcommand{\halfplace}[1]{{#1}_{\scriptscriptstyle{\nicefrac{1}{2}}}}
\newcommand{\gtcommit}[1]{\mathsf{commit}_{#1}}
\newcommand{\gtrecv}{{\mathsf{recv}}}
\newcommand{\gtfwd}{{\mathsf{send}}} % TODO: RENAME
\newcommand{\gtack}{\mathsf{ack}}
\newcommand{\gtreset}{\mathsf{reset}}
\renewcommand{\partial}{\rightharpoonup}
\newcommand{\pmcp}[4]{\mathsf{PMCP}_{#4}\ifthenelse{\equal{#1}{}\AND\equal{#2}{}\AND\equal{#3}{}}{}{({#1},{#2},{#3})}}
\newcommand{\prp}[4]{\mathsf{PRP}_{#4}\ifthenelse{\equal{#1}{}\AND\equal{#2}{}\AND\equal{#3}{}}{}{({#1},{#2},{#3})}}
\tikzstyle{petri-p}=[circle,thick,draw=black,inner sep=0pt,minimum size=5mm]
\tikzstyle{petri-thor}=[rectangle,draw=black,thick,inner sep=0pt,minimum width=5mm,minimum height=1mm]
\tikzstyle{petri-tver}=[rectangle,draw=black,thick,inner sep=0pt,minimum width=1mm,minimum height=5mm]
\tikzstyle{petri-tpos}=[rectangle,draw=black,thick,inner sep=0pt,minimum width=1mm,minimum height=5mm,rotate=-45]
\tikzstyle{petri-tneg}=[rectangle,draw=black,thick,inner sep=0pt,minimum width=1mm,minimum height=5mm,rotate=45]
\tikzstyle{petri-tok}=[circle,inner sep=0pt,minimum size=4pt, color=black,fill=black]
\tikzstyle{petri-small-tok}=[circle,inner sep=0pt,minimum size=3pt, color=black,fill=black]        
\tikzstyle{gnode}=[circle,inner sep=0pt,minimum size=4pt, color=black,fill=black]
\newif\ifLongVersion\LongVersiontrue
\newenvironment{lemmaAtEnd}[1][]{\begin{lemmaE}}{\end{lemmaE}}
\renewcommand{\paragraph}[1]{\emph{#1}} 
\begin{document}

\title{Verifying Parameterized Networks Specified by Vertex-Replacement Graph Grammars}

\author{
  Radu Iosif\inst{1}\orcidlink{0000-0003-3204-3294}
  \and Arnaud Sangnier\inst{2}\orcidlink{0000-0002-6731-0340}
  \and Neven Villani\inst{1}\orcidlink{0000-0003-2726-5036}
}
\institute{
  Univ. Grenoble Alpes, CNRS, Grenoble INP, VERIMAG, 38000, France
  \texttt{\{firstname\}.\{lastname\}@univ-grenoble-alpes.fr}
  \and DIBRIS, Univ. of Genova, Italy \\
  \texttt{\{firstname\}.\{lastname\}@unige.it}
}

\maketitle

\begin{abstract}
  We consider the parametric reachability problem ($\prp{}{}{}{}$) for
  families of networks described by vertex-replacement (\vrtext) graph
  grammars, where network nodes run replicas of finite-state processes
  that communicate via binary handshaking. We show that the
  $\prp{}{}{}{}$ problem for \vrtext{} grammars can be effectively
  reduced to the $\prp{}{}{}{}$ problem for hyperedge-replacement
  (\hrtext) grammars, at the cost of introducing extra edges for
  routing messages. This transformation is motivated by the existence
  of several parametric verification techniques for families of
  networks specified by \hrtext{} grammars, or similar inductive
  formalisms. Our reduction enables applying the verification
  techniques for \hrtext{} systems to systems with dense
  architectures, such as user-specified cliques and multi-partite
  graphs.
\end{abstract}

\section{Introduction}

As pointed out in recent literature, ``\emph{network verification is a
  necessary part of deploying modern hyperscale
  datacenters}''~\cite{DBLP:conf/sigcomm/JayaramanBPABBF19}. In this
context, verification considers global properties of routing control,
such as access between point A and point B. Very often, the intended
functionality of a network can be derived from its architecture,
typically known at early stages of network design.

The huge scale of present-day datacenters, of $\sim\!\!10^4$ routers for a
regional hub and continuously growing, requires \emph{parametric}
verification techniques, that work no matter how many servers in a
rack, switches in a cluster, clusters in a layer, etc.
Despite decades of theoretical research and an impressive body
of results (see~\cite{BloemJacobsKhalimovKonnovRubinVeithWidder15} for
a nice survey), these techniques consider mostly hard-coded
architectures, either dense (\eg cliques~\cite{GermanSistla92}) or
sparse (\eg rings~\cite{ClarkeGrumbergBrowne86}) families of graphs
with a common topological pattern.

The problem of verifying networks with user-specified architecture has
gained recent momentum, with the development of parametric
verification techniques for systems described using logic
\cite{DBLP:journals/dc/AminofKRSV18} or graph grammars
\cite{LeMetayer}. In principle, logic-based graph specification
languages are good at describing global properties, such as planarity,
$k$-colorability or reachability, which make them more suitable for
the specification of correctness properties, than network design.

On the other hand, graph grammars are appealing for their constructive
aspect, \ie defining how large graphs are built from smaller
subgraphs. Moreover, this recursive way of describing sets of graphs
is common among programmers and software engineers, who are familiar
with inductively defined datastructures (lists, trees, etc.)
used in both imperative and functional programming.
Graph grammars are at the core of the theory of formal languages
(see~\cite{courcelle_engelfriet_2012} for a comprehensive
survey). Based on the underlying set of operations (\ie graph
algebra), we distinguish between \emph{vertex-replacement} (\vrtext)
and \emph{hyperedge-replacement} (\hrtext) grammars. In principle,
\hrtext\ graph grammars can specify families of graphs having bounded
tree-width, such as chains, rings, stars, trees (of unbounded rank)
and beyond, \eg overlaid structures such as trees or stars with
certain nodes linked in a list. Since cliques and grids are families
of unbounded tree-width, neither can be specified using \hrtext\ graph
grammars. On the other hand, \emph{vertex-replacement} (\vrtext)
grammars are strictly more expressive~\cite{COURCELLE1995275}. For
instance, \vrtext{} grammars can describe the sets $\set{K_n}_{n \geq
  1}$ of cliques and $\set{K_{n,m}}_{n,m \geq 1}$ of complete
bipartite graphs, among others.

A concrete motivation for using \vrtext{} grammars to design
datacenter networks can be found in the informal description of the
Azure datacenter architecture
~\cite{10.1145/1594977.1592576,DBLP:conf/sigcomm/JayaramanBPABBF19},
deployed in the Microsoft hyperscale cloud network
(\figref{fig:azure}). The \emph{top-of-rack} (ToR) switches connect
servers hosted in a rack. Several ToR switches are connected together
by a number of \emph{leaf} switches. Leaf switches are in turn
connected together by a set of \emph{spine} switches, that connect the
datacenter to the Azure regional network. These networks cannot be
described as cliques, because of their layered structure. However, the
links between each layer (\ie ToR, leaf, spine, regional spine) and its
adjacent layers form a complete bipartite graph, thus lying outside
the scope of \hrtext{} grammars, which cannot define complete
bipartite graphs of unbounded sizes\footnote{Each set
$\set{K_{n,m}}_{n\in N,m\in M}$ for $N,M$ infinite subsets of $\nat$,
has unbounded tree-width, whereas \hrtext{}-grammars define bounded
tree-width sets.}.

\begin{figure}[t!]
  \vspace*{-\baselineskip}
  \centerline{\input{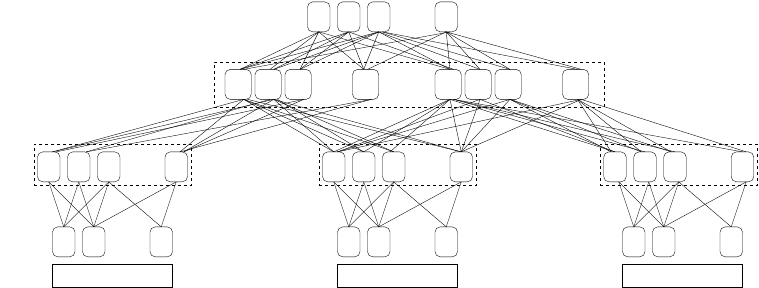_t}}
  \vspace*{-.5\baselineskip}
  \caption{Azure Datacenter Switching Topology}
  \label{fig:azure}
  \vspace*{-\baselineskip}
\end{figure}

The precise relation between the expressivity of \vrtext{} and
\hrtext{} grammars has been elegantly coined in a result by Courcelle,
stating that the sets defined by \hrtext{} grammars are exactly the
sparse sets defined by \vrtext{}
grammars~\cite{COURCELLE1995275}. Moreover, each \vrtext{} grammar can
be transformed into an \hrtext{} grammar producing a set of graphs
with $\epsilon$-edges, such that the elimination of these edges
from the \hrtext{} language yields the original \vrtext{}
language. For instance, the elimination of $\epsilon$-edges from the
graph from \figref{fig:K43} (b) yields the complete bipartite graph
$K_{4,3}$ from \figref{fig:K43} (a).

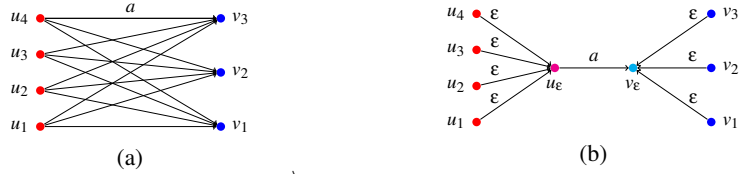
\begin{figure}[t!]
  \vspace*{-\baselineskip}
  \begin{minipage}{0.5\textwidth}
    \begin{center}
    \scalebox{0.8}{\begin{tikzpicture}
      \tikzset{
        point/.style={circle,inner sep=0pt,minimum size=4pt},
        thickpoint/.style={circle,inner sep=0pt,minimum size=15pt},
        arr/.style={->,line width=0.1pt}
      }
      \foreach \i in {0,...,3} {
        \pgfmathtruncatemacro\idx{\i+1}
        \node[point,fill=red,label=180:{$u_\idx$}] (u\i) at (0,0.6*\i) {};
      }
      \foreach \j in {0,...,2} {
        \pgfmathtruncatemacro\idx{\j+1}
        \node[point,fill=blue,label=0:{$v_\idx$}] (v\j) at (3,0.9*\j) {};
      }
      \foreach \i in {0,...,3} {
        \foreach \j in {0,...,2} {
          \draw[arr] (u\i) -- (v\j) ;
        }
      }
      \draw[arr] (u3) -- node[above]{$a$} (v2);
    \end{tikzpicture}}
    \end{center}

    \vspace*{-\baselineskip}
    \centerline{(a)}
  \end{minipage}
  \begin{minipage}{0.5\textwidth}
    \begin{center}
    \scalebox{0.8}{\begin{tikzpicture}
      \tikzset{
        point/.style={circle,inner sep=0pt,minimum size=4pt},
        thickpoint/.style={circle,inner sep=0pt,minimum size=15pt},
        arr/.style={->,line width=0.1pt}
      }
      \node[point,fill=magenta,label=-90:{$u_\epsilon$}] (u) at (0,0) {};
      \node[point,fill=cyan,label=-90:{$v_\epsilon$}] (v) at (1.3,0) {};
      \draw[arr] (u) -- node[above]{$a$} (v);
      \foreach \i in {0,...,3} {
        \pgfmathtruncatemacro\idx{\i+1}
        \node[point,fill=red,label=180:{$u_\idx$}] (u\i) at ($(u) + (-1.3,-0.9+0.6*\i)$) {};
        \draw[arr] (u\i) -- node[above,pos=0.2]{$\epsilon$} (u);
      }
      \foreach \j in {0,...,2} {
        \pgfmathtruncatemacro\idx{\j+1}
        \node[point,fill=blue,label=0:{$v_\idx$}] (v\j) at ($(v) + (1.3,-0.9+\j*0.9)$) {};
        \draw[arr] (v\j) -- node[above,pos=0.2]{$\epsilon$} (v);
      }
    \end{tikzpicture}}
    \end{center}

    \vspace*{-\baselineskip}
    \centerline{(b)}
  \end{minipage}
  \vspace*{-1\baselineskip}
  \caption{ The complete bipartite graph $\protect\overrightarrow{K}_{4,3}$ (a) and
    one possible encoding using $\epsilon$-edges (b). }
  \label{fig:K43}
  \vspace*{-\baselineskip}
\end{figure}~

\paragraph{Contribution}
We leverage the idea of transforming \vrtext{} grammars into
\hrtext{} grammars modulo elimination of $\epsilon$-edges to define a
reduction from the parametric reachability problem ($\prp{}{}{}{}$)
for systems described by \vrtext{} grammars and finite-state local
behaviors, with correctness properties given by unrestricted
arithmetic formul{\ae} over a set of variables, tracking how many
processes are in a given local state, to the $\prp{}{}{}{}$ problem
for systems described by \hrtext{} grammars, having the same
correctness property. This is motivated by several recent developments
in the verification of parametric systems with \hrtext{}-style
architecture descriptions. In
particular,~\cite{DBLP:journals/tcs/BozgaIS23} reports on an inference
of structural invariants (\ie over-approximations of the reachable
set, that can be derived from the structure of the network) for
architectures described using a variant of Separation Logic with
inductive definitions (same expressivity as \hrtext{} grammars)
Moreover,~\cite{arXiv2025} gives an abstraction technique
that folds similar network nodes into a finite Petri net, whose
coverability problem over-approximates the original parametric
coverability problem. In this paper, we give a more general reduction
method, that preserves all safety properties specified by arithmetic
assertions, rather than the particular case of coverability.

\paragraph{Related Work}
The literature on parametric verification is too large to be recalled
here. We point to \cite{BloemJacobsKhalimovKonnovRubinVeithWidder15}
for a survey on the (un)decidability results, typically concerning
clique or ring architectures. The specification of network
architectures using inductive definitions can be traced back to the
work on \emph{network
  grammars}~\cite{ShtadlerGrumberg89,LeMetayer,Hirsch}, that use
inductive rules to describe systems with linear (pipeline, token-ring)
architectures. These initial works target mainly safety properties, by
inferring network
invariants~\cite{WolperLovinfosse89,LesensHalbwachsRaymond97}. Pipeline
and clique architectures are also considered by methods based on the
theory of well-structured transition
systems~\cite{DBLP:conf/lics/AbdullaCJT96}, such as monotonic
abstraction~\cite{abdulla-monotonic-foundcs09,abdulla-approximated.fmsd09}.

A prominent exception, that considers the verification of clique-width
bounded architectures specified using Monadic Second Order Logic
(\mso) against \ctlstarnext{} properties
is~\cite{DBLP:journals/dc/AminofKRSV18}. Their result uses a reduction
from the parametric model checking problem ($\pmcp{}{}{}{}$) to the decidability of
\mso{} over graphs of bounded
clique-width~\cite{courcelle_engelfriet_2012}. Our result is
orthogonal, because we consider \vrtext{} grammars that strictly
subsume the \mso-definable sets of bounded clique-width. It is,
however, an interesting question if our reduction can be used to
establish a more general decidability result. A promising lead would
be to use the decidability of $\pmcp{}{}{}{}$ for \hrtext{} systems
with local behaviors that mimick the passing of a pebble from one node
to another~\cite{arXiv2025}.

\paragraph{Notation}
We denote by $\ints$ ($\nat$) the set of (positive) integers. For $i,
j \in \nat$, we denote by $\interv{i}{j}$ the set $\set{i,\ldots,j}$,
considered empty if $i > j$. The cardinality of a finite set $A$ is
written $\cardof{A}$. A singleton $\set{a}$ is simply denoted $a$.
The union of two disjoint sets $A$ and $B$ is denoted as $A \uplus B$.
The Cartesian product of two sets $A$ and $B$ is denoted $A \times B$.
As usual, we define $A^*$ the set of (possibly empty) finite sequences
of elements from $A$, and $A^\infty$ the set of finite or infinite
sequences over $A$. Finally $\pow{A}$ is the powerset of $A$.

\section{Graphs}
\label{sec:graphs}

For simplicity reasons, in this paper we consider only networks whose
topologies are described by oriented binary graphs, where vertices
model processes and edges model a symmetric rendezvous between two
processes, \ie we do not distinguish the process initiating the
communication and consider that both participants have equal roles.

Let $\valpha$ and $\ealpha$ be finite disjoint alphabets of vertex and
edge labels, respectively. A \emph{graph} is a tuple
$\graph=(\vertof{\graph},\edgeof{\graph},\labof{\graph})$, where
$\vertof{\graph}$ is a finite set of vertices, $\edgeof{\graph}
\subseteq \vertof{\graph} \times \ealpha \times \vertof{\graph}$ is a
set of edges, such that $(v,\elab,v') \in \edgeof{\graph}$ implies
$v\neq v'$ (\ie graphs have no self-loops) and $\labof{\graph} :
\vertof{\graph} \rightarrow \pow{\valpha}$ assigns a set of labels
to each vertex. We do not distinguish graphs that are isomorphic.
We denote by $\graphs{\valpha}{\ealpha}$ the set of graphs with
vertex and edge labels from $\valpha$ and $\ealpha$, or simply
$\graphs{}{}$, if $\ealpha$ and $\valpha$ are understood.

We shall consider parameterized networks described as infinite sets of
graphs. Albeit infinite, these sets have a finite description, given
by finitely many inductive rules stating how the graphs in the set are
built from smaller graphs. This constructive\footnote{As
  opposed to the descriptive method of defining sets of graphs by
  their common properties, using \eg monadic second-order
  logic.} approach to the specification of infinite sets of graphs
relies on graph algebras, being at the core of an impressive body of
theoretical work (see \cite{courcelle_engelfriet_2012} for a survey).

\subsection{Algebras}
\label{sec:algebras}

We present two classical graph algebras, namely
\emph{vertex-replacement} (\vrtext) and \emph{hyperedge-replacement}
(\hrtext). Let $\valpha$ and $\ealpha$ be fixed in the following. We
consider a set $\palpha \subseteq \valpha$ of distinguished vertex
labels, called \emph{ports}, such that $\cardof{\labof{\graph}(v) \cap
  \palpha} \leq 1$, for each graph
$\graph\in\graphs{\valpha}{\ealpha}$ and each vertex $v \in
\vertof{\graph}$.
In other words, a vertex is labeled with at most one port.
We say that a vertex $v\in\vertof{\graph}$ is a $\plab$-port of $\graph$
whenever $\plab\in\labof{\graph}(v)$.
The \emph{sort} of a graph is the set of ports that occur in the labels
of the vertices from the graph.

The labels from $\valpha\setminus\palpha$ will be needed later to
associate vertices with local behaviors, referred to as process
types. For the time being, we assume the set of process types to be
$\valpha\setminus\palpha=\set{\vlab_1,\ldots,\vlab_n}$ and that the
set of port labels is partitioned accordingly, \ie $\palpha =
\palpha_{\vlab_1} \uplus \ldots \uplus \palpha_{\vlab_n}$. We say that
the \emph{process type} of a port $\plab\in\palpha$ is
$\vlab\in\valpha\setminus\palpha$, denoted $\ptypeof{}(\plab)=\vlab
\iffdef \plab \in \palpha_\vlab$. A partial function $\alpha : \palpha
\rightharpoonup \palpha$ is \emph{type-preserving} iff
$\alpha(\plab)$, if defined, has the same process type as $\plab$, for
each $\plab\in\palpha$.

\begin{figure}[t!]
  \vspace*{-\baselineskip}
  \centerline{\input{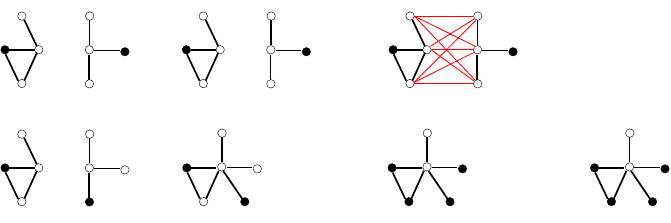_t}}
  \caption{\vrtext{} operations (a) \hrtext{} operations (b). Ports
    are depicted as shallow circles; port labels are natural
    numbers. }
  \label{fig:graphs}
  \vspace*{-\baselineskip}
\end{figure}

The domain of the \emph{\vrtext{}-algebra} is the set $\graphs{}{}$ of
graphs. The operations of \vrtext{} are the following, see
\figref{fig:graphs} (a) for an illustration: \begin{compactitem}[-]
\item \emph{Disjoint union}: $\graph_1 \vrpop{\plabs_1}{\plabs_2}
  \graph_2 \isdef H$ iff the sorts of $\graph_1$ and $\graph_2$ are
  $\plabs_1$ and $\plabs_2$, respectively, $\vertof{H} =
  \vertof{\graph_1} \uplus \vertof{\graph_2}$, $\edgeof{H} =
  \edgeof{\graph_1} \uplus \edgeof{\graph_2}$ and $\labof{H} =
  \labof{\graph_1} \uplus \labof{\graph_2}$ where, if
  $\vertof{\graph_1} \cap \vertof{\graph_2} \neq \emptyset$ or
  $\edgeof{\graph_1} \cap \edgeof{\graph_2} \neq \emptyset$, we
  replace $\graph_2$ by an isomorphic copy disjoint from $\graph_1$.
  We write $\vrpop{}{}$ instead of $\vrpop{\plabs_1}{\plabs_2}$ when
  the argument sorts are understood.
\item \emph{Edge creation}: for an edge label $\elab\in\ealpha$ and
  port labels $\plab_1 \neq \plab_2\in\palpha$,
  $\addedge{\elab}{\plab_1}{\plab_2}(\graph) \isdef H$ iff
  $\vertof{H}=\vertof{\graph}$,
  $\edgeof{H}=\edgeof{\graph}\cup\set{(v_1,\elab,v_2) \mid v_1, v_2
    \in \vertof{\graph},~ \plab_i\in\labof{\graph}(v_i),~ i=1,2}$ and
  $\labof{H}=\labof{\graph}$; in other words,
  $\addedge{\elab}{\plab_1}{\plab_2}$ adds a directed $\elab$-edge
  from each $\plab_1$-port to each $\plab_2$-port of its argument. No
  edge is added if such an edge already exists. Moreover, because
  $\plab_1$ and $\plab_2$ are different labels and each vertex has at
  most one port label, no self-loops are introduced by this operation.
\item \emph{Port redefinition}: for a type-preserving partial function
  $\alpha : \palpha \rightharpoonup \palpha$, $\relab{\alpha}(\graph)
  \isdef H$ iff $\vertof{H}=\vertof{\graph}$,
  $\edgeof{H}=\edgeof{\graph}$ and $\labof{H}(v) = \{\alpha(\plab)
  \mid \plab \in \labof{\graph}(v) \cap \palpha\} \cup
  (\labof{\graph}(v) \setminus \palpha)$, for each $v\in\vertof{H}$,
  \ie the port labels are redefined according to $\alpha$ (if
  $\alpha(\plab)$ is undefined, the port label is erased) and the
  other labels are unchanged.
\item \emph{Single-vertex graphs}: for a port label $\plab\in\palpha$,
  $\svertex{\plab} \isdef H$ iff $\vertof{H}=\set{u}$ for some vertex
  $u$, $\edgeof{H}=\emptyset$ and $\labof{H}(u)=\set{\plab,\ptypeof{}(\plab)}$. 
\end{compactitem}
For simplicity, we use the same notation for a \vrtext{} operation
above and its corresponding function symbol. Since $\ealpha$ and
$\palpha$ are finite sets, the above set of function symbols is
finite. A \vrtext-term $\theta[x_1,\ldots,x_n]$ consists of function
symbols (\ie the binary $\vrpop{}{}$ and unary
$\addedge{\elab}{\plab_1}{\plab_2}$ and $\relab{\alpha}$) and the
variables $x_1,\ldots,x_n$ of arity zero. A term is ground if it has
no occurrences of variables. We write $\val{\vr}{\theta}$ for the graph obtained
by interpreting the function symbols in the ground term $\theta$ as the
corresponding \vrtext{}-operations. 

\begin{example}\label{ex:K43VR}
  The \vrtext-term \( \theta \isdef
  \relab{\emptyset}(\addedge{a}{\plab}{\plab'}(
  (\bigvrpop{i\in 1,...,4} \svertex{\plab})
  \vrpop{}{}
  (\bigvrpop{j\in 1,...,3} \svertex{\plab'}) )) \)
  evaluates to the complete
  bipartite graph $\val{\vr}{\theta} = \overrightarrow{K}_{4,3}$ with
  directed edges $a$, see \figref{fig:K43} (a).
  Edge labels are $\ealpha = \set{a}$,
  port labels are $\palpha = \set{\plab,\plab'}$,
  and vertex labels are
  $\valpha = \palpha\cup\set{\ptypeof{}(\plab),\ptypeof{}(\plab')}$.
  The $\plab$-ports are $u_1,...,u_4$ and the $\plab'$-ports are $v_1,...,v_3$
  in \figref{fig:K43} (a).
  We write $\emptyset$ for the empty partial function, that removes all port labels.
\end{example}

%% \mycomment{Arnaud}{I would not put the next paragraph here, maybe more in the introduction}
%% The \vrtext{} algebra allows to define the \emph{clique-width}
%% parameter of a graph, \ie the minimum number of ports needed to build
%% the graph using \vrtext{} operations. Monadic Second Order (\mso)
%% Logic is the extension of First Order Logic with quantification over
%% sets of vertices. A celebrated theorem of Courcelle states that
%% satisfiability (\ie existence of a model for a given formula) of
%% \mso{} is decidable for graphs of bounded clique-width~\cite[Theorem
%%   1.26(1)]{courcelle_engelfriet_2012}. Moreover, it is conjectured
%% that each class of graphs having a decidable \mso{} theory has bounded
%% clique-width~\cite{COURCELLE200679}.

The domain of the \emph{\hrtext{}-algebra} is the set
$\hrgraphs{\valpha}{\ealpha} \subseteq \graphs{\valpha}{\valpha}$ of
graphs $\graph$, written $\hrgraphs{}{}$ when $\valpha$ and $\ealpha$
are understood, where $\labof{\graph}(v_1) \cap \labof{\graph}(v_2)
\cap \palpha \neq \emptyset$ implies $v_1 = v_2$, for any two vertices
$v_1,v_2 \in \vertof{\graph}$. In other words, a port labels at most
one vertex. In the usual terminology, ports are called \emph{sources},
when the \hrtext{}-algebra is understood from the context. The
operations of \hrtext{} are the following, see \figref{fig:graphs} (b)
for an illustration: \begin{compactitem}[-]
\item \emph{Composition}: $\graph_1 \hrpop{\plabs_1}{\plabs_2} \graph_2
  \isdef H$ iff $H$ is obtained from the disjoint union of $\graph_1,
  \graph_2 \in \hrgraphs{}{}$ of sorts $\plabs_1$ and $\plabs_2$,
  respectively (taking an isomorphic copy of $\graph_2$ disjoint from
  $\graph_1$, if necessary) by joining all pairs of vertices $v_i \in
  \vertof{\graph_i}$, for $i=1,2$ such that $\labof{\graph_1}(v_1)
  \cap \labof{\graph_2}(v_2) \cap \palpha \neq \emptyset$. The label
  of a vertex $v$ obtained by joining $v_1 \in \vertof{\graph_1}$ with
  $v_2 \in \vertof{\graph_2}$ is $\labof{\graph_1}(v_1) \cup
  \labof{\graph_2}(v_2)$. Since the label of a vertex in a graph from
  $\hrgraphs{}{}$ contains at most one source, the composition of any
  two graphs from $\hrgraphs{}{}$ is contained in $\hrgraphs{}{}$. We
  write $\hrpop{}{}$ instead of $\hrpop{\plabs_1}{\plabs_2}$ when
  the argument sorts are understood. 
\item \emph{Source redefinition}: $\relab{\alpha}$ is defined in the
  same way as for \vrtext{}, with the further restriction that
  $\alpha$ is an injective partial function. This restriction is
  necessary to ensure that $\relab{\alpha}(\graph)\in\hrgraphs{}{}$ if
  $\graph\in\hrgraphs{}{}$.
\item \emph{Single-vertex graphs}: are defined in the same way as for
  \vrtext{}.
\item \emph{Single-edge graphs}: for an edge label $\elab\in\ealpha$
  and two port labels $\plab_1 \neq \plab_2\in\palpha$,
  $\edge{\elab}{\plab_1}{\plab_2}\isdef H$ iff
  $\vertof{H}=\set{v_1,v_2}$, $\edgeof{H}=\set{(v_1,\elab,v_2)}$,
  $\labof{H}(v_i)=\set{\plab_i,\ptypeof{}(\plab_i)}$, for $i=1,2$.
\end{compactitem}
\hrtext{}-terms are defined just as \vrtext{}-terms and $\val{\hr}{\theta}$
denotes the graph obtained by interpreting the function symbols in the
ground term $\theta$ as the corresponding \hrtext{}-operations.

\begin{example}\label{ex:K43HR}
  We write $(\plab)$ for the partial function that acts as the
  identity on $\set{\plab}$, erasing all source labels except $\plab$.
  The \hrtext-term 
  \( \theta' \isdef
    \relab{\emptyset}(\edge{a}{\plab_\epsilon}{\plab'_\epsilon})
    \hrpop{}{}
    (\bighrpop{i\in 1,...,4}
      \relab{(\plab_\epsilon)}(\edge{\epsilon}{\plab}{\plab_\epsilon})
    )
    \hrpop{}{}
    (\bighrpop{j\in 1,...,3}
      \relab{(\plab'_\epsilon)}(\edge{\epsilon}{\plab'}{\plab'_\epsilon})
    )
  \)
  evaluates to the graph shown in \figref{fig:K43} (b).  The edge
  labels are $\ealpha = \set{a,\epsilon}$, the port labels are
  $\palpha = \set{\plab,\plab',\plab_\epsilon,\plab'_\epsilon}$, the
  vertex labels are $\valpha = \palpha \cup \{\ptypeof{}(\plab),
  \ptypeof{}(\plab'), \ptypeof{}(\plab_\epsilon),
  \ptypeof{}(\plab'_\epsilon)\}$. The $\plab$, $\plab'$,
  $\plab_\epsilon$ and $\plab'_\epsilon$-ports in \figref{fig:K43}
  are respectively $u_{1,...,4}$, $v_{1,...,3}$, $u_\epsilon$, $v_\epsilon$.
\end{example}

%% The \hrtext{} algebra induces the tree-width parameter of a graph,
%% that has a similar definition as the clique-width parameter for the
%% \vrtext{} algebra. Moreover, the Monadic Second Order Logic with
%% quantification over combined sets of vertices and edges (\edgemso) is
%% decidable over bounded tree-width classes of graphs~\cite[Theorem
%%   1.26(2)]{courcelle_engelfriet_2012} and each class of graphs having
%% a decidable \edgemso{} theory has bounded tree-width~\cite[Theorem
%%   8]{SEESE1991169}.

\subsection{Grammars}
\label{sec:grammars}

Graph grammars are a standard way to represent sets of graphs.
Let $\Omega$ be an algebra, either \vrtext{} or \hrtext{}.
A $\Omega$-\emph{grammar} is a pair $\grammar=(\nonterm,\rules)$
consisting of a finite set $\nonterm$ of \emph{nonterminals}
and a finite set $\rules$ of \emph{rules} of the form,
either (1) $X \rightarrow t[X_1,\ldots,X_n]$,
where $X,X_1,\ldots,X_n \in \nonterm$ are nonterminals and $t$
is a $\Omega$-term whose only variables are $X_1,\ldots,X_n$,
or (2) $\rightarrow X$, where $X \in \nonterm$;
the rules of this form are called \emph{axioms}.
Given $\Omega$-terms $\theta$ and $\eta$, a \emph{step}
$\theta \step{\grammar} \eta$ obtains $\eta$ from $\theta$
by replacing an occurrence of a nonterminal $X$ with the term $t$,
for some rule $X \rightarrow t[X_1,\ldots,X_n]$ of $\grammar$.
A $X$-\emph{derivation} is a sequence of steps starting with a nonterminal $X$.
The derivation is \emph{complete} if it ends with a ground term.
Let $\alangof{X}{\grammar} \isdef \{\val{\Omega}{\theta}
  \mid X \step{\grammar}^* \theta \text{ is a complete derivation}\}$
and $\alangof{}{\grammar} \isdef \bigcup_{\rightarrow X \in \rules}
  \alangof{X}{\grammar}$
be the \emph{language} of $\grammar$, the algebra $\Omega$
being understood from the rules of $\grammar$.
A set of graphs is \vrtext{} (\resp \hrtext{})
iff it is the language of a \vrtext{} (\resp \hrtext{}) grammar.

\begin{example}\label{ex:grammars}
  The \vrtext-grammar $\grammar$ below produces the term
  $\theta$ from \exref{ex:K43VR}.
  The language of this grammar contains \figref{fig:K43} (a).
  The \hrtext-grammar $\grammar'$ produces the term $\theta'$ from
  \exref{ex:K43HR}.
  The language of this grammar contains \figref{fig:K43} (b). \\
\begin{minipage}{0.5\textwidth}
  \begin{center}
  $\grammar: \left\{
  \begin{array}{rl}
      \to &~ S \\
    S \to &~ \relab{\emptyset}{} (\addedge{a}{\plab}{\plab'}{}(K)) \\
    K \to &~ \svertex{\plab} \\
    K \to &~ \svertex{\plab'} \\
    K \to &~ K \vrpop{}{} K
  \end{array}\right.$
  \end{center}
\end{minipage}
\begin{minipage}{0.5\textwidth}
  \begin{center}
    $\grammar': \left\{
    \begin{array}{rl}
      \to &~ S \\
    S \to &~ \relab{\emptyset} (\edge{a}{\plab_\epsilon}{\plab'_\epsilon} \hrpop{}{} K) \\
    K \to &~ \relab{({\plab_\epsilon})} (\edge{\epsilon}{\plab}{\plab_\epsilon}) \\
    K \to &~ \relab{({\plab'_\epsilon})} (\edge{\epsilon}{\plab'}{\plab'_\epsilon}) \\
    K \to &~ K \hrpop{}{} K
  \end{array}\right.$
  \end{center}
\end{minipage}
\end{example}

It is known that the expressivity of \vrtext-grammars strictly
subsumes that of \hrtext-grammars~\cite{ENGELFIET1990163}.
The relation between \vrtext{} and \hrtext{} is made precise by the following:

\begin{definition}\label{def:expansion}
  Let $\expalpha$ be a set of edge labels disjoint from $\ealpha$. An
  $\expalpha$-graph is a graph
  $\graph \in \graphs{\valpha}{\ealpha \cup \expalpha \cup \protect\overleftarrow{\expalpha}}$
  such that \begin{compactitem}[-]
  \item $\overleftarrow{\expalpha} \isdef \set{\overleftarrow{e} \mid e \in \expalpha}$,
  and for every edge $(u, e, v)$, $e \in \expalpha$,
  there exists an edge $(v, \overleftarrow{e}, u)$.
  \item the subgraph of $\graph$ consisting
  of $\expalpha$-labeled edges and the vertices that have incoming
  $\expalpha$-labeled edges is a forest,
  \ie every non-root vertex points to a single parent.
  \end{compactitem}
  The \emph{expansion} of an $\expalpha$-graph $\graph\in
  \graphs{\valpha}{\ealpha \cup \expalpha \cup \protect\overleftarrow{\expalpha}}$
  is $\expof{\graph} \in \graphs{\valpha}{\ealpha}$, where: \begin{compactitem}[-]
  \item $\vertof{\expof{\graph}}$ is the set of vertices of $\graph$
    that are not the target of an $\expalpha$-labeled edge,
\item $\edgeof{\expof{\graph}}$ is the set of edges $(v_1,\elab,v_2)$
  for which there exist $\expalpha$-labeled paths from $v_i$ to some
  vertices $v'_i \in \graph$, for $i=1,2$, such that
  $(v'_1,\elab,v'_2) \in \edgeof{\graph}$, for some $\elab\in\ealpha$,
\item $\labof{\expof{\graph}}(v) \isdef \labof{\graph}(v)$.
\end{compactitem}
\end{definition}

\begin{proposition}[Proposition 2.4 in \cite{COURCELLE1995275}]
  For each \vrtext-grammar $\grammar$, one can build a \hrtext-grammar
  $\grammar'$ such that
  $\alangof{}{\grammar}=\expof{\alangof{}{\grammar'}}$.
\end{proposition}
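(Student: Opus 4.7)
The plan is to translate each VR-term compositionally into an HR-term, using an auxiliary edge label $\epsilon$ and, for each VR-port label $\plab\in\palpha$, a fresh HR-source label $\plab_\epsilon$, so that $\expof{\val{\hr}{\tau(\theta)}}=\val{\vr}{\theta}$ holds for every VR-term $\theta$ derivable in $\grammar$. Following the encoding pattern already visible in \exref{ex:K43HR}, I will represent each VR-port label $\plab$ by a single ``gateway'' vertex carrying HR-source $\plab_\epsilon$, and every real vertex that would be a $\plab$-port in the VR-semantics will become a descendant of this gateway in the forest formed by the $\epsilon$-edges. Because HR-sources label at most one vertex, real vertices must drop their VR port labels and retain only their vertex-type label $\ptypeof{}(\plab)$; the port information is encoded structurally by the $\epsilon$-tree and recovered at expansion time via \defref{def:expansion}.

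Next I will define the translation $\tau$ on VR-operations. For a single-vertex constant, take $\tau(\svertex{\plab})\isdef\relab{(\plab_\epsilon)}(\edge{\epsilon}{\plab}{\plab_\epsilon})$, which produces a real vertex and a gateway joined by a single $\epsilon$-edge. Disjoint union translates to HR-composition, $\tau(t_1\vrpop{}{}t_2)\isdef\tau(t_1)\hrpop{}{}\tau(t_2)$, so matching gateways are identified and their $\epsilon$-subtrees are joined with no other effect. Edge creation translates to composition with a single-edge graph joining the two gateways, $\tau(\addedge{\elab}{\plab_1}{\plab_2}(t))\isdef\tau(t)\hrpop{}{}\edge{\elab}{\plab_{1,\epsilon}}{\plab_{2,\epsilon}}$; the expansion then lifts this one $\elab$-edge to the full bipartite pattern between the $\plab_1$- and $\plab_2$-port descendants required by VR. For an injective port-redefinition $\alpha$, take $\tau(\relab{\alpha}(t))\isdef\relab{\alpha_\epsilon}(\tau(t))$ with $\alpha_\epsilon(\plab_\epsilon)\isdef\alpha(\plab)_\epsilon$.

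The hard part is the non-injective case of $\alpha$, which HR disallows since its $\relab{}$ must be injective. I will simulate the merger of two ports $\plab_1,\plab_2$ into a single port $\plab$ by composing $\tau(t)$ with a small gadget consisting of two vertices with sources $\plab_{1,\epsilon},\plab_{2,\epsilon}$, each $\epsilon$-linked to a third vertex with fresh source $\plab_\epsilon$, and then applying an injective $\relab{}$ that erases $\plab_{1,\epsilon}$ and $\plab_{2,\epsilon}$. The $\epsilon$-edges of the gadget point from the two old gateways up to the new super-gateway, keeping the single-parent forest invariant of \defref{def:expansion} intact, and after expansion the $\plab_1$- and $\plab_2$-descendants become precisely the $\plab$-descendants. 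A general non-injective $\alpha$ is then handled by iterating this two-port merger and composing with the injective case above. Finally, $\grammar'$ is built by introducing, for every VR-nonterminal $X$ of sort $\plabs$, an HR-nonterminal $X'$ of sort $\plabs_\epsilon\isdef\{\plab_\epsilon\mid\plab\in\plabs\}$, replacing each rule $X\to t[X_1,\ldots,X_n]$ by $X'\to\tau(t)[X'_1,\ldots,X'_n]$ and each axiom $\to X$ by $\to X'$. A routine induction on derivation length, closed by the invariant $\expof{\val{\hr}{\tau(\theta)}}=\val{\vr}{\theta}$, yields $\alangof{}{\grammar}=\expof{\alangof{}{\grammar'}}$.
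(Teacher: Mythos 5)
Your overall strategy --- one gateway vertex per live \vrtext{} port label, real vertices hanging below it in the $\epsilon$-forest, edge creation simulated by a single edge between gateways, and port merging simulated by stacking a new gateway above the old ones --- is exactly the construction the paper intends: the paper does not reprove this proposition but cites Courcelle and illustrates precisely this encoding on $K_{4,3}$ (\exref{ex:K43HR}, \exref{ex:grammars}, \exref{ex:expansion}), and its own system-level translation $\expand$ is an elaboration of the same idea. The single-vertex, disjoint-union and injective-relabeling cases are fine as written.

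The step that fails is the non-injective $\relab{\alpha}$. You merge $\plab_1,\plab_2$ with $\alpha(\plab_1)=\alpha(\plab_2)=\plab$ by composing with a gadget whose new super-gateway already carries the source $\plab_\epsilon$. When $\plab\in\set{\plab_1,\plab_2}$ (e.g.\ $\alpha=[\plab_1\mapsto\plab_1,\plab_2\mapsto\plab_1]$, a legal type-preserving map), $\hrpop{}{}$ fuses the new gateway with the old $\plab_{1,\epsilon}$-gateway, turning the gadget's $\epsilon$-edge into a self-loop (forbidden by the graph definition) and destroying the single-parent forest invariant of \defref{def:expansion}. The fix is to give the gadget's output vertices genuinely fresh barred sources $\overline{\plab_\epsilon}$ and to follow the composition with the injective renaming $\overline{\plab_\epsilon}\mapsto\plab_\epsilon$ that simultaneously erases the unbarred sources; this is exactly the role of $\overline{(\plab,t)}$ and $\mathsf{fresh}$ in the paper's $\expand$. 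Two further points. First, the closing invariant $\expof{\val{\hr}{\tau(\theta)}}=\val{\vr}{\theta}$ is both too weak to carry the induction through edge creation and relabeling, and not literally true for terms of nonempty sort (the port labels sit on the gateways, which expansion deletes, rather than on the surviving vertices); you need the stronger invariant that the graphs agree up to port labels and that the $\plab$-ports of $\val{\vr}{\theta}$ are exactly the $\epsilon$-descendants of the unique $\plab_\epsilon$-gateway of $\val{\hr}{\tau(\theta)}$. Second, $\tau(\addedge{\elab}{\plab_1}{\plab_2}(t))$ as written creates a fresh endpoint with no incoming $\epsilon$-edge --- hence one that survives expansion as a spurious vertex --- whenever $t$ has no $\plab_{i,\epsilon}$-gateway; the translation must therefore be directed by the (statically computable) sort of each subterm and omit the edge when a port class is empty.
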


\begin{example}\label{ex:expansion}
  The \hrtext-grammar $\grammar'$ from \exref{ex:grammars} is
  obtained by transformation of the \vrtext-grammar $\grammar$
  from the same example, such that
  $\alangof{}{\grammar}=\expof{\alangof{}{\grammar'}}$, where
  $\expalpha=\set{\epsilon}$.
  We omit $\overleftarrow{\expalpha}$-edges from this figure:
  their placement is simply the inverse of $\expalpha$-edges.
\end{example}

\section{Parameterized Systems}
\label{sec:systems}

We aim at describing a family of networks by a \vrtext-grammar. To
model the behavior of a network, we associate to each vertex in the graph
a \emph{process type}, which is a finite-state machine represented as
a Petri net of a particular form. We denote by $\ptypes$ the fixed and
finite set of process types. In the rest of this paper, the alphabet
of vertex labels is assumed to be $\valpha \isdef \ptypes \uplus
\palpha$, where $\palpha=\biguplus_{\ptype\in\ptypes}\palpha_\ptype$
is a set of port labels partitioned according to the process types,
\ie $\palpha_\ptype$ is the set of ports corresponding to the process
type $\ptype$. The intuition is that different vertices labeled with a
process type run copies of that process type, called
\emph{processes}. Neighbouring processes communicate by joining their
transitions in a symmetric synchronous rendezvous, represented by a
pair of transitions, that labels the edge between their host
vertices. Hence, the alphabet $\ealpha$ of edge labels is considered
to be the set of pairs of transitions from the process types
$\ptypes$.

\subsection{Petri Nets}
\label{subsec:pn}

We make the definition of process types precise by recalling Petri
nets. A \emph{net} is a tuple $\anet =
(\placeof{\anet},\transof{\anet},\weightof{\anet})$, where
$\placeof{\anet}$ is a finite set of \emph{places}, $\transof{\anet}$
is a finite set of \emph{transitions}, disjoint from
$\placeof{\anet}$, and $\weightof{\anet} : (\placeof{\anet} \times
\transof{\anet}) \cup (\transof{\anet} \times \placeof{\anet})
\rightarrow \nat$ is a \emph{weighted incidence relation} between
places and transitions. For all $x,y\in\placeof{\anet} \cup
\transof{\anet}$ such that $\weightof{\anet}(x,y) > 0$, we say that
there is an \emph{edge of weight} $\weightof{\anet}(x,y)$ between $x$
and $y$. For an element $x\in\placeof{\anet} \cup \transof{\anet}$, we
define the set of \emph{predecessors} $\pre{x} \isdef \set{y \in
  \placeof{\anet} \cup \transof{\anet} \mid \weightof{\anet}(y,x)>0}$,
\emph{successors} $\post{x} \isdef \set{y \in \placeof{\anet} \cup
  \transof{\anet} \mid \weightof{\anet}(x,y)>0}$ and
predecessor-successor pair $\prepost{x} \isdef (\pre{x},\post{x})$.
If not obvious from the context, we will specify the net in which the predecessor
and successor and considered: $(\pre{x})_\anet$, $(\post{x})_\anet$, $(\prepost{x})_\anet$.

A \emph{marking} of $\anet$ is a function $\amark : \placeof{\anet}
\rightarrow \nat$. A transition $t$ is \emph{enabled} in the marking
$\amark$ if $\amark(q) \geq \weightof{\anet}(q,t)$, for each place $q
\in \places$. For all markings $\amark$, $\amark'$ and transitions $t
\in \trans$, we write $\amark \fire{t} \amark'$ whenever $t$ is
enabled in $\amark$ and $\amark'(q) = \amark(q) -
\weightof{\anet}(q,t) + \weightof{\anet}(t,q)$, for all $q \in
\placeof{\anet}$.
Given a marking $\amark$, a sequence of transitions
$\rho = (t_1,t_2,...) \in (\transof{\amarkednet})^\infty$
is a \emph{firing sequence} iff either \begin{inparaenum}[(i)]
\item $\rho$ is empty, or
\item there exist markings $\amark_1,\amark_2,\ldots$
  such that $\amark \fire{t_1} \amark_1 \fire{t_2} \amark_2 \fire{t_3} \cdots$
\end{inparaenum}.
When $\rho$ has finite length $n$, we can write $\amark \fire{\rho} \amark_n$,
and we say that $\rho$ is a firing sequence from $\amark$ to $\amark_n$.

A \emph{Petri net} (PN) is a pair $\amarkednet=(\anet,\amark_0)$,
where $\anet$ is a net and $\amark_0$ is the \emph{initial marking} of
$\anet$. For simplicity, we write $\placeof{\amarkednet} \isdef
\placeof{\anet}$, $\transof{\amarkednet} \isdef \transof{\anet}$,
$\weightof{\amarkednet} \isdef \weightof{\anet}$ and
$\initmarkof{\amarkednet}\isdef\amark_0$ for the elements of
$\amarkednet$. A marking $\amark$ is \emph{reachable} in $\amarkednet$
iff there exists a finite firing sequence $\rho$ such that $\amark_0
\fire{\rho} \amark$.
We denote by $\pathsof{\amarkednet}$ the set of finite or infinite
firing sequences of $\amarkednet$ starting from $\initmarkof{\amarkednet}$.

\subsection{Behaviors}
\label{subsec:behaviors}

We formalize the behavior of a system (\ie a graph whose vertices are
labeled by process types) using PNs. To do so, we associate to each
process type a PN having a special form:

\begin{definition}\label{def:process-type}
  A \emph{process type} $\ptype$ is a PN having weights at most
  $1$ and exactly one marked place initially, whose transitions are
  partitioned into \emph{observable} $\obstransof{\ptype}$ and
  \emph{internal} $\inttransof{\ptype}$, \ie
  $\transof{\ptype}=\obstransof{\ptype} \uplus \inttransof{\ptype}$,
  such that each transition has exactly one predecessor and one
  successor. Let $\ptypes = \set{\ptype_1, \ldots, \ptype_k}$ be a
  finite fixed set of process types such that $\placeof{\ptype_i} \neq
  \emptyset$, for all $i \in \interv{1}{k}$ and $\placeof{\ptype_i}
  \cap \placeof{\ptype_j} = \emptyset$, for all $1 \leq i < j \leq k$.
\end{definition}
Because a process type has exactly one initial token and all
transitions have one predecessor and one successor, every reachable
marking of a process type has exactly one token.
%% A PN having this property is said to be \emph{automata-like}.
We denote by $\placeof{\ptypes}\isdef\biguplus_{\ptype\in\ptypes}
\placeof{\ptype}$ and $\obstransof{\ptypes} \isdef
\biguplus_{\ptype\in\ptypes} \obstransof{\ptype}$ the sets of places
and observable transitions from some $\ptype\in\ptypes$, respectively.

\begin{example}\label{ex:proctypes}
  \figref{fig:proctype} (a) shows two examples of process types, $Once$ and $Loop$.
  Observable transitions are shown in black, and internal transitions in yellow.
  They have $\placeof{Once} = \set{\mathrm{on}, \mathrm{off}}$,
  $\obstransof{Once} = \transof{Once} = \set{\mathrm{send}}$,
  $\placeof{Loop} = \set{\mathrm{free}, \mathrm{busy}}$,
  $\obstransof{Loop} = \set{\mathrm{recv}}$
  and $\inttransof{Loop} = \set{\mathrm{handle}}$.
\end{example}

\begin{figure}[t!]
  \vspace*{-\baselineskip}
  \begin{center}
    \begin{minipage}{.45\textwidth}
      \scalebox{0.7}{
        \begin{tikzpicture}[node distance=1.5cm]
          \tikzstyle{every state}=[inner sep=3pt,minimum size=20pt]
          \tikzset{
            arr/.style={->,line width=1pt,thick}
          }
          \node[petri-p,draw=black,label=-90:{on}] (on) at (0,0) {};
          \node[petri-p,draw=black,label=90:{off}] (off) at ($(on) + (0,2)$) {};
          \node[petri-thor,draw=black,fill=black,label=0:{send}] (send) at ($(on) + (0.7,1)$) {};
          \draw[arr] (on) edge[out=0,in=-90] (send);
          \draw[arr] (send) edge[out=90,in=0] (off);
          \node[petri-tok] at (on) {};

          \node at ($(on) + (0,-1)$) {Process type \emph{Once}};

          \node[petri-p,draw=black,label=-90:{free}] (free) at (3.5,0) {};
          \node[petri-p,draw=black,label=90:{busy}] (busy) at ($(free) + (0,2)$) {};
          \node[petri-thor,draw=black,fill=black,label=180:{recv}] (recv) at ($(free) + (-0.7,1)$) {};
          \node[petri-thor,draw=black,fill=yellow,label=0:{handle}] (handle) at ($(free) + (0.7,1)$) {};
          \draw[arr] (free) edge[out=180,in=-90] (recv);
          \draw[arr] (recv) edge[out=90,in=180] (busy);
          \draw[arr] (busy) edge[out=0,in=90] (handle);
          \draw[arr] (handle) edge[out=-90,in=0] (free);
          \node[petri-tok] at (free) {};

          \node at ($(free) + (0,-1)$) {Process type \emph{Loop}};
        \end{tikzpicture}}
      \centerline{(a)}
    \end{minipage}
    \begin{minipage}{.5\textwidth}
      \scalebox{0.6}{
        \begin{tikzpicture}
          \tikzset{
            arr/.style={->,line width=1pt,thick}
          }
          \foreach \i in {0,...,3} {
            \node[petri-p,label=-90:{on}] (on\i) at (2.8*\i,0) {};
            \node[petri-p,label=-90:{off}] (off\i) at ($(on\i) + (1,0)$) {};
            \node[petri-tok] at (on\i) {};
            
          }
          \foreach \j in {0,...,2} {
            \node[petri-p,label=120:{free}] (free\j) at (4.2*\j,3.2) {};
            \node[petri-p,label=60:{busy}] (busy\j) at ($(free\j) + (1,0)$) {};
            \node[petri-tok] at (free\j) {};
            \node[petri-tver,fill=yellow] (int\j) at ($(free\j) + (0.5,0.5)$) {};
            \draw[arr] (busy\j) edge[bend right=40] (int\j);
            \draw[arr] (int\j) edge[bend right=40] (free\j);
          }
          \foreach \i in {0,...,3} {
            \foreach \j in {0,...,2} {
              \node[petri-tver,fill=black] (obs\i\j) at (-0.5+2.8*\i+\j,1) {};
            }
          }
          \foreach \i in {0,...,3} {
            \foreach \j in {0,...,2} {
              \draw[arr] (on\i) edge[bend left=20] (obs\i\j);
              \draw[arr] (obs\i\j) edge[bend left=20] (off\i);
              \draw[arr] (free\j) edge[bend right=5] (obs\i\j);
              \draw[arr] (obs\i\j) edge[bend right=5] (busy\j);
            }
          }
      \end{tikzpicture}}
      
      \centerline{(b)}
    \end{minipage}
  \end{center}
  \vspace*{-1.9\baselineskip}
\caption{Examples of process types (a). The behavior of a $\protect\overrightarrow{K}_{4,3}(\mathrm{send},\mathrm{recv})$ system (b)}
\label{fig:proctype}
\vspace*{-\baselineskip}
\end{figure}

\begin{definition}\label{def:system}
  A \emph{system} $\asys=(\vertof{\asys},\edgeof{\asys},\labof{\asys})
  \in \graphs{\ptypes \uplus
    \palpha}{\obstransof{\ptypes}\times\obstransof{\ptypes}}$ is a
  graph whose vertices $v \in \vertof{\asys}$ are labeled with exactly
  one process type $\procof{\asys}(v)=\ptype \iffdef
  \labof{\asys}(v)\cap\ptypes=\set{\ptype}$ and at most one port
  $\portof{\asys}(v)=\plab \iffdef
  \labof{\asys}(v)\cap\palpha=\set{\plab}$.  Edges
  $v_1\arrow{(t_1,t_2)}{} v_2 \in \edgeof{\asys}$ are labeled with
  pairs of observable transitions, such that $t_i \in
  \obstransof{\procof{\asys}(v_i)}$, for $i=1,2$. We denote by
  $\systems{\ptypes}\isdef\graphs{\ptypes}{\obstransof{\ptypes}\times\obstransof{\ptypes}}$
  the set of systems of empty sort, \ie without ports.
\end{definition}

The communication (\ie synchronization between processes) in a system
is formally captured by the following notion of behavior:

\begin{definition}\label{def:behavior}
  A \emph{behavior} is a PN $\amarkednet$ such that $1 \leq
  \cardof{\pre{t}}=\cardof{\post{t}} \leq 2$, for each
  $t\in\transof{\amarkednet}$. The \emph{behavior of a system}
  $\asys$ is $\behof{\asys} \isdef
  (\anet,\amark_0)$, where: \begin{itemize}
  \item $\placeof{\anet} \isdef \set{(q,v) \mid q \in
    \placeof{\procof{\asys}(v)},~ v \in \vertof{\asys}}$, a place
    $(q,v)$ corresponds to the place $q$ of the process type that
    labels the vertex $v$,
  \item $\transof{\anet} \isdef \edges \cup \set{(t,v) \mid t \in
    \inttransof{\procof{\asys}(v)},~ v \in \vertof{\asys}}$, the
    transitions are either edges of the system (\ie modeling the
    synchronizations of two processes) or pairs $(t,v)$ corresponding
    to an internal transition $t$ of the process type that labels the
    vertex $v$,
  \item the weight function $\weightof{\anet}$ is such that
     internal transitions are preserved, and observable transitions
     are merged according to edge labels.
      That is, for every $v$ such that
      $(\prepost{t})_{\procof{\asys}(v)} = (\set{q}, \set{q'})$
      we have $(\prepost{(t,v)})_\anet = (\set{(q,v)}, \set{(q',v')})$,
      and for every edge $e = (v \arrow{(t,t')}{} v')$
      with $(\prepost{t})_{\procof{\asys}(v)} = (\set{q_1},\set{q_2})$
      and $(\prepost{t'})_{\procof{\asys}(v')} = (\set{q'_1},\set{q'_2})$
      we have $(\prepost{e})_\anet = (\set{(q_1,v),(q'_1,v')},\set{(q_2,v),(q'_2,v')})$.
  \item $\amark_0(q,v) \isdef \initmarkof{\procof{\asys}(v)}(q)$, for
    all $v \in \vertof{\asys}$ and $q \in
    \placeof{\procof{\asys}(v)}$.
  \end{itemize}
\end{definition}

\begin{example}\label{ex:systems}
  \figref{fig:proctype} (b) shows the behavior of
  $\overrightarrow{K}_{4,3}(\mathrm{send},\mathrm{recv})$, having the
  architecture shown in \figref{fig:K43} (a), with the red vertices
  $u_{1\ldots4}$ labeled with the process type $\mathit{Once}$ and
  blue vertices $v_{1\ldots3}$ labeled with the process type
  $\mathit{Loop}$, from \figref{fig:proctype}. The edges are labeled
  by the pair $(\mathrm{send},\mathrm{recv})$ of observable
  transitions from $\mathit{Once}$ and $\mathit{Loop}$.
\end{example}

%% By construction, among the places $\set{(q, v) \mid
%% q\in\placeof{\ptypeof{\asys}(v)}}$ for any $v\in\vertof{\asys}$,
%% there is exactly one token in all reachable markings because
%% $\ptypeof{\asys}(v)$ is automata-like. By extension we say that
%% such behaviors are automata-like.

\subsection{The Parametric Reachability Problem}

A \emph{parametric} \vrtext{} (\resp \hrtext{}) \emph{system} is
described by a \vrtext{} (\resp \hrtext{}) grammar $\grammar$. We
define our decision problem as a parametric reachability problem
asking if there exists an instance of a parametric system, described
by a grammar, whose behavior reaches an error configuration from a
given set. If the answer is negative, we have a proof that the
parametric system is correct.

Let $\vars$ be a fixed countably infinite set of variables,
interpreted over natural numbers. An arithmetic formula $\alpha$ is a
(possibly quantified) first-order formula using variables in $\vars$,
constants denoting natural numbers, the binary operations of addition
and multiplication and the (in)equality relation. We denote by
$\atoms$ the set of such formul{\ae}. Let $\amarkednet$ be
a PN and $\varlab : \placeof{\amarkednet} \rightharpoonup \vars$ be a
partial function that labels certain places of $\amarkednet$ with
variables. The boolean value $\sem{\alpha}_\amark^\varlab \in
\set{\mathrm{true},\mathrm{false}}$ of an arithmetic formula $\alpha$
in a marking $\amark : \placeof{\amarkednet} \rightarrow \nat$ is
obtained by replacing each variable $x$ that occurs free in $\alpha$
by the integer value $\sum_{q \in \varlab^{-1}(x)} \amark(q)$, \ie the
total number of tokens from the places associated with $x$ in
$\amark$. We further define $\atomsof{\amark}{\varlab} \isdef
\set{\alpha \in \atoms \mid \sem{\alpha}_\amark^\varlab =
  \mathrm{true}}$, \ie the set of arithmetic formul{\ae} satisfied by
a given marking and a variable labeling.  A PN $\amarkednet$ satisfies
a reachability specification $\alpha$ for a place labeling $\varlab$,
written $(\amarkednet,\varlab) \models \alpha$, iff there exists a
finite firing sequence $\rho$ leading to a reachable marking
$\initmarkof{\amarkednet} \fire{\rho} \amark'$ such that $\alpha \in
\atomsof{\amark'}{\varlab}$.

We specialize the satisfiability of a reachability specification
by an arbitrary PN to the satisfiability by the behavior of
a system $\asys \in \systems{\ptypes}$ and a labeling
$\syslab : \placeof{\ptypes} \rightharpoonup \vars$ of the places
from the process types $\ptypes$ with variables:
\begin{align*}
  (\asys,\syslab) \models \alpha \iffdef (\behof{\asys},\syslab \circ \proj{}{1}) \models \alpha
\end{align*}
where $\proj{}{1}$ is the projection of a pair on its first component,
\ie $\syslab \circ \proj{}{1}$ labels each pair $(q,v) \in
\placeof{\behof{\asys}}$ by the variable $\syslab(q)$, if the latter
is defined.
Intuitively, we require of $\varlab$ that it assigns the same variable
to all ``copies'' of the same place $\set{(q,v) \mid v \in \vertof{\asys}}$.
Hence, the boolean value $\sem{\alpha}_\amark^{\syslab
  \circ \proj{}{1}}$ of an atomic proposition in a marking $\amark$ of
$\behof{\asys}$ is obtained by mapping each free variable $x$ from
$\alpha$ to the total number of tokens from a place $(q,v) \in
\placeof{\behof{\asys}}$ such that $\syslab(q)=x$. This paper is
concerned with the following problem:

\begin{definition}\label{def:pmcp}
  Let $\ptypes$ be a set of process types. The Parametric Reachability Problem
  $\prp{\grammar}{\alpha}{\syslab}{\ptypes}$ takes as input a
  grammar $\grammar$ such that $\alangof{}{\grammar} \subseteq
  \systems{\ptypes}$, a reachability formula $\alpha$ and a
  place labeling $\syslab : \placeof{\ptypes} \rightharpoonup \vars$,
  and asks if there exists some system $\asys \in \alangof{}{\grammar}$
  for which $(\asys,\syslab) \models \alpha$ holds.
\end{definition}

\begin{example}
  Choosing $\grammar$ the \vrtext-grammar (\exref{ex:grammars}),
  the set of variables $\vars \isdef \set{x,y}$, the place labeling
  $\syslab = [\mathrm{on} \mapsto x, \mathrm{off} \mapsto y]$
  defined on the process types from \figref{fig:proctype} (a), and the
  formula $\alpha \isdef y \geq x + 2$, we find that
  $\prp{\grammar}{\alpha}{\syslab}{\ptypes}$ has a positive answer,
  because the behavior \figref{fig:proctype} (b) which belongs to
  $\behof{\langu(\grammar)}$ admits a reachable marking $\amark$
  with $3$ tokens on $\mathrm{off}$ and $1$ token on $\mathrm{on}$,
  satisfying $\sem{\alpha}_\amark^{\syslab\circ\proj{}{1}} =
  \sem{3 \geq 1 + 2} = \mathrm{true}$. 
\end{example}

In the light of inherent theoretical boundaries, related to the
undecidability of the above parametric verification
problem~\cite{DBLP:conf/popl/EmersonN95}, several semi-algorithmic
methods have been proposed, for parametric \hrtext{}
systems~\cite{arXiv2025} and parametric systems described using
similar formalisms, such as a dialect of separation logic with
inductive definitions~\cite{DBLP:journals/tcs/BozgaIS23} and a
recursive term algebra~\cite{DBLP:conf/facs2/BozgaI21}. In particular,
the architecture description languages used in
\cite{DBLP:journals/tcs/BozgaIS23,DBLP:conf/facs2/BozgaI21} have
similar graph composition and relabeling as the more standard
\hrtext{} grammars. For these methods, the verification of certain
coverability properties (\eg absence of deadlocks and critical section violations)
relies on the generation of structural invariants for the parametric
family of behaviors, directly from the specification of the
architectures and the process types. Examples include \emph{trap
  invariants}, \ie sets of places that, once marked, remain forever
marked, and \emph{mutex invariants}, \ie sets of places of which at
most one is marked. The generation of structural invariants can be
redefined for \hrtext{} grammars at little cost.

In contrast, there is virtually no verification method for parametric
\vrtext{} systems. Many decidability results in the literature
consider parametric systems with clique
architectures~\cite{GermanSistla92}, that can be described by simple
\vrtext-grammars. A prominent result is the decidability of the more
general parametric model checking problem ($\pmcp{}{}{}{}$) for
\emph{token-passing systems} (where communication is restricted to the
passing of a token between processes) with \mso-definable bounded
clique-width architectures~\cite[Theorem
  26]{DBLP:conf/concur/AminofKRSV14}. It is known that each
\mso-definable set of bounded clique-width is definable by a \vrtext{}
grammar, but not viceversa.
Moreover, an orthogonal\footnote{This result applies to all \hrtext{}
sets of architectures, not just the \mso-definable ones.}
decidability result for parametric \hrtext{} token-passing systems is
given in \cite[Theorem 4]{arXiv2025}. It is an interesting open
problem whether this decidability result for \hrtext{} systems carries
over to \vrtext{} systems, but this exceeds the scope of the
current paper, and will be considered in future work.

\section{Translating \vrtext{} to \hrtext{} Systems}

The existence of several verification techniques for parametric
systems with \hrtext{} architectures and the scarsity of similar
results for \vrtext{} systems motivates us to define a translation
from \vrtext{} to \hrtext{} systems, that preserves the answer of
the $\prp{}{}{}{}$ decision problem (\defref{def:pmcp}).
This translation uses expansion to encode dense graphs as sparse graphs
(see \figref{fig:K43}).
Since labeled graphs are used as system models (\defref{def:system}),
the behavior of a sparse graph must be equivalent to the behavior of its expansion,
on what concerns the satisfiability of reachability formul{\ae},
thus reducing each instance of the $\prp{}{}{}{}$ problem for a \vrtext{}
grammar to an instance of the same problem for an \hrtext{} grammar.

\begin{theorem}\label{thm:main}
  There exist computable $\ptypes^\expand$ and $\syslab^\expand$,
  process types and place labeling respectively,
  such that for any \vrtext-grammar $\grammar$,
  variable labeling $\syslab : \placeof{\ptypes} \rightarrow \vars$
  and reachability formula $\alpha$,
  one can effectively build a \hrtext-grammar $\grammar'$
  such that
  \(\prp{\grammar}{\alpha}{\syslab}{\ptypes} \iff
  \prp{\grammar'}{\alpha}{\syslab^\expand}{\ptypes^\expand}\)
\end{theorem}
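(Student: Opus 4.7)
The plan is to compose Courcelle's translation with a local message-passing protocol that mimics each \vrtext{} rendezvous across the routing trees of the \hrtext{} encoding. First I would apply Proposition~2.4 to $\grammar$, obtaining an intermediate \hrtext-grammar $\grammar''$ with $\expof{\alangof{}{\grammar''}}=\alangof{}{\grammar}$, so that each $\asys\in\alangof{}{\grammar}$ has an \hrtext{} preimage $\asys''\in\alangof{}{\grammar''}$ whose $\epsilon$-forest acts as the routing backbone for the dense edges of $\asys$. The grammar $\grammar'$ is then obtained from $\grammar''$ by decorating each $\epsilon$-edge with the pairs of observable transitions the protocol must carry along it, and by decorating each ``real'' edge of $\grammar''$ with the pairs that let two adjacent routers exchange a request and its acknowledgement.

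\textbf{Process types and labeling.} I would fix $\ptypes^\expand\isdef\ptypes\uplus\{\ptype_\epsilon\}$, where $\ptype_\epsilon$ is a router process type with an idle place and a relay place per $(t_1,t_2,d)\in\obstransof{\ptypes}\times\obstransof{\ptypes}\times\{\mathsf{up},\mathsf{down}\}$. Each $\ptype\in\ptypes$ is enlarged by splitting every observable transition $t:q\to q'$ into a start transition $q\to q_{t,\mathrm{mid}}$ that engages the adjacent router and a commit transition $q_{t,\mathrm{mid}}\to q'$ that fires when the acknowledgement returns. The labeling $\syslab^\expand$ agrees with $\syslab$ on $\placeof{\ptypes}$, leaves all places of $\ptype_\epsilon$ unlabeled, and sets $\syslab^\expand(q_{t,\mathrm{mid}})\isdef\syslab(q)$, so that a mid-protocol token still counts toward $\syslab(q)$ and is invisible to $\alpha$ until commitment.

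\textbf{Equivalence.} For $(\Rightarrow)$, any firing sequence of $\behof{\asys}$ is simulated on $\behof{\asys''}$ for some $\asys''\in\alangof{}{\grammar'}$ with $\expof{\asys''}=\asys$: each rendezvous on an edge $v_1\to v_2$ of $\asys$ is played as the round-trip of the relay protocol along the $\epsilon$-path between $v_1$ and $v_2$ in $\asys''$, closed by the two commit transitions, and the labeling choice makes the counts agree throughout. For $(\Leftarrow)$, the only count-changing transitions of $\behof{\asys''}$ are the commits, each of which can be paired with a prior start transition; the resulting sequence of rendezvous defines a \vrtext{} firing sequence reaching a marking with the same counts as the \hrtext{} one.

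\textbf{Main obstacle.} The delicate point is that the two commits of a single rendezvous are separate \hrtext{} transitions at $v_1$ and $v_2$, so a marking in which one has fired but the other has not could in principle be caught by $\alpha$ in a count configuration that no \vrtext{} marking can realise. The labeling choice $\syslab^\expand(q_{t,\mathrm{mid}})=\syslab(q)$ is designed to neutralise this effect by keeping mid-tokens equivalent to source tokens under $\alpha$; certifying that it does so requires a Petri-net invariant relating the relay places of the routers along each routing tree to the intermediate places of the refined process types at its leaves, proved by induction on the length of a firing sequence. This invariant is what guarantees that the projection of every reachable \hrtext{} marking onto $\placeof{\ptypes}$ coincides with the counts of a reachable \vrtext{} marking, closing the equivalence.
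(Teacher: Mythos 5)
Your overall architecture matches the paper's: split each observable transition $t:q\to q'$ into a try and a commit around a new intermediate place, route the request and its acknowledgement through the $\epsilon$-forest of the Courcelle-style encoding, and close the argument with a marking correspondence plus an invariant/stuttering argument. The gap is in the one design decision you yourself flag as delicate, and your fix for it does not work. Setting $\syslab^\expand(q_{t,\mathrm{mid}})\isdef\syslab(q)$ means the counts seen by $\alpha$ change exactly when a commit fires; since the two commits of one rendezvous are separate transitions at $v_1$ and $v_2$, there is a reachable marking of the \hrtext{} behavior in which $v_1$ has committed and $v_2$ has not. Concretely, take two process types with single observable transitions $t:a_1\to a_2$ and $t':b_1\to b_2$, one vertex of each, one edge labeled $(t,t')$, and $\syslab$ injective: the \vrtext{} behavior only realises the count vectors $(1,0,1,0)$ and $(0,1,0,1)$ over $(a_1,a_2,b_1,b_2)$, whereas your encoding reaches $(0,1,1,0)$ after the first commit, so $\alpha\isdef(x_{a_2}=1\wedge x_{b_1}=1)$ yields a positive answer on $\grammar'$ and a negative one on $\grammar$. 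The invariant you appeal to --- that the projection of every reachable \hrtext{} marking onto $\placeof{\ptypes}$ coincides with the counts of a reachable \vrtext{} marking --- is therefore false under your labeling, and no relabeling of $q_{t,\mathrm{mid}}$ alone repairs it: mapping it to $\syslab(q')$ merely moves the desynchronisation to the try transitions, and additionally counts rendezvous that may never complete.

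The paper's resolution is to move the count change off the host vertices entirely: the intermediate place $\halfplace{t}$ (your $q_{t,\mathrm{mid}}$) is left \emph{unlabeled}, and the router places carry the in-flight token's weight instead, via $\syslab^\expand(\gtactive_t)\isdef\syslab(\pre{t})$ and $\syslab^\expand(\gtreply_t)\isdef\syslab(\post{t})$. Every try, commit and routing step then preserves all counts, and the unique count-changing transition is the single joint firing of $(t,t')$ between the two representative routers, which flips both participants' contributions from $\syslab(\pre{t}),\syslab(\pre{t'})$ to $\syslab(\post{t}),\syslab(\post{t'})$ in one atomic step --- exactly mirroring the \vrtext{} rendezvous. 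This is what makes the stuttering relation $\stutter_\syslab$ between the two behaviors provable. A secondary point: merely decorating the edges of Courcelle's output grammar is not quite enough; the paper defines the translation $\expand$ by induction on \vrtext{} terms precisely to maintain, through relabeling and disjoint union, the invariant that each pair of a port label $\plab$ and a transition $t$ has a unique router representative to which all current $\plab$-ports are connected, which is what licenses placing the single $(t,t')$ edge between representatives.
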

\noindent This result shows that it is possible to solve each instance
of the parametric model checking problem that takes a \vrtext{}
grammar as input by an effective reduction to an instance of the same
problem for a \hrtext{} grammar, which, as previously mentioned, has
received more attention lately~\cite{arXiv2025}.

Intuitively the translation works as follows:
vertices in the \hrtext{} system are either
``real'' vertices from the original \vrtext{} system, or routing nodes.
A unique vertex of the \hrtext{} system will act as representative of
all vertices currently $\plab$-ports in the \vrtext{} system,
and communication between a $\plab$-port $v$ and a $\plab'$-port $v'$
in the \vrtext{} system will in the \hrtext{} system roughly take the form of
(1) a request from $v$ to its representative $v_\plab$ through a chain of routing nodes,
(2) similarly a request from $v'$ to its representative $v_{\plab'}$,
(3) a direct communication between $v_\plab$ and $v_{\plab'}$
(4) an acknowledgement from $v_\plab$ to $v$ following the inverse chain of routing nodes,
(5) an acknowledgement from $v_{\plab'}$ to $v'$.
This is a simplification because in reality
steps (1--2) and (4--5) above may be interleaved,
and rather than a single representative there is
one representative per observable transition.
We define below the process types that correspond to routing nodes,
give an intuition of how they function,
then show the actual inductive translation.

\begin{figure}[t!]
  \vspace*{-\baselineskip}
  \begin{center}
    \begin{minipage}{3.5cm}
      \begin{center}
      \scalebox{0.7}{
        \begin{tikzpicture}
          \tikzset{
            arr/.style={->,thick,line width=1pt}
          }
          \node[petri-p,label=-90:{$\mathsf{lo}$}] (lo) at (0,0) {};
          \node[petri-tok] at (lo) {};
          \node[petri-p,label=90:{$\mathsf{hi}$}] (hi) at ($(lo) + (0,2)$) {};
          \node[petri-thor,fill=black,label=0:{$\mathsf{up}$}] (up) at ($(lo) + (1,1)$) {};
          \node[petri-thor,fill=black,label=180:{$\mathsf{dn}$}] (dn) at ($(lo) + (-1,1)$) {};
          \draw[arr] (lo) edge[out=0,in=-90] (up);
          \draw[arr] (up) edge[out=90,in=0] (hi);
          \draw[arr] (hi) edge[out=180,in=90] (dn);
          \draw[arr] (dn) edge[out=-90,in=180] (lo);
        \end{tikzpicture}
      }
      \end{center}
       \vspace*{-1\baselineskip}
      \centerline{(a)}
    \end{minipage}
    \begin{minipage}{4.5cm}
      \begin{center}
      \scalebox{0.7}{
        \begin{tikzpicture}
          \tikzset{
            arr/.style={->,thick,line width=1pt}
          }
          \node[petri-p,label=-90:{$\mathsf{lo}$}] (lo) at (0,0) {};
          \node[petri-tok] at (lo) {};
          \node[petri-p,label=90:{$\mathsf{hi}$}] (hi) at ($(lo) + (0,2)$) {};
          \node[petri-p,label=0:{$\halfplace{\mathsf{up}}$}] (up) at ($(lo) + (1,1)$) {};
          \node[petri-p,label=180:{$\halfplace{\mathsf{dn}}$}] (dn) at ($(lo) + (-1,1)$) {};
          \node[petri-tneg,fill=black,label=-60:{$\gttry{\mathsf{up}}$}] (up1) at ($(lo)!0.5!(up) + (0.3,-0.3)$) {};
          \node[petri-tpos,fill=black,label=60:{$\gtcommit{\mathsf{up}}$}] (up2) at ($(up)!0.5!(hi) + (0.3,0.3)$) {};
          \node[petri-tneg,fill=black,label=60:{$\gttry{\mathsf{dn}}$}] (dn1) at ($(dn)!0.5!(hi) + (-0.3,0.3)$) {};
          \node[petri-tpos,fill=black,label=-60:{$\gtcommit{\mathsf{dn}}$}] (dn2) at ($(lo)!0.5!(dn) + (-0.3,-0.3)$) {};
          \draw[arr] (lo) edge[bend right=20] (up1);
          \draw[arr] (up1) edge[bend right=20] (up);
          \draw[arr] (up) edge[bend right=20] (up2);
          \draw[arr] (up2) edge[bend right=20] (hi);
          \draw[arr] (hi) edge[bend right=20] (dn1);
          \draw[arr] (dn1) edge[bend right=20] (dn);
          \draw[arr] (dn) edge[bend right=20] (dn2);
          \draw[arr] (dn2) edge[bend right=20] (lo);
        \end{tikzpicture}
      }
      \end{center}
       \vspace*{-1\baselineskip}
      \centerline{(b)}
    \end{minipage}
    \begin{minipage}{4cm}
       \begin{center}
         \scalebox{0.7}{
        \begin{tikzpicture}
          \tikzset{
            arr/.style={->,thick,line width=1pt}
          }
          \node[petri-p,label=-90:{$\mathsf{idle}$}] (idle) at (0,0) {};
          \node[petri-tok] at (idle) {};
          \node[petri-p,label=90:{$\mathsf{wait}$}] (wait) at ($(idle) + (0,2)$) {};
          \node[petri-p,label=0:{$\mathsf{active}$}] (active) at ($(idle) + (1,1)$) {};
          \node[petri-p,label=180:{$\mathsf{reply}$}] (reply) at ($(idle) + (-1,1)$) {};
          \node[petri-tneg,fill=black,label=-60:{$\gtrecv$}] (recv) at ($(idle)!0.5!(active) + (0.3,-0.3)$) {};
          \node[petri-tpos,fill=black,label=60:{$\gtfwd$}] (fwd) at ($(wait)!0.5!(active) + (0.3,0.3)$) {};
          \node[petri-tneg,fill=black,label=60:{$\gtack$}] (ack) at ($(wait)!0.5!(reply) + (-0.3,0.3)$) {};
          \node[petri-tpos,fill=black,label=-60:{$\gtreset$}] (reset) at ($(idle)!0.5!(reply) + (-0.3,-0.3)$) {};
          \node[petri-tver,fill=black,label=-90:{$t$}] (exec) at ($(reply)!0.5!(active)$) {};
          \draw[arr] (idle) edge[bend right=20] (recv);
          \draw[arr] (recv) edge[bend right=20] (active);
          \draw[arr] (active) edge[bend right=20] (fwd);
          \draw[arr] (fwd) edge[bend right=20] (wait);
          \draw[arr] (wait) edge[bend right=20] (ack);
          \draw[arr] (ack) edge[bend right=20] (reply);
          \draw[arr] (reply) edge[bend right=20] (reset);
          \draw[arr] (reset) edge[bend right=20] (idle);
          \draw[arr] (active) edge (exec);
          \draw[arr] (exec) edge (reply);
        \end{tikzpicture}
         }
       \end{center}
       \vspace*{-1\baselineskip}
       \centerline{(c)}
    \end{minipage}
  \end{center}
  \vspace*{-1.7\baselineskip}
  \caption{A process type $\ptype$ (a), the corresponding process type
    $\halve{\ptype}$ (b), and $\epsilon_t$ (c).
  }
  \label{fig:halfprocs}
  \vspace*{-\baselineskip}
\end{figure}

Let $\ptypes$ be the set of process types used by the \vrtext{}
grammars, in the rest of this section. We define a new set of process
types $\ptypes^\expand$ such that $\placeof{\ptypes} \subseteq
\placeof{\ptypes^\expand}$, see \figref{fig:halfprocs} (b) for an
example. First, let $\halve{\ptypes} \isdef
\set{\halve{\ptype} \mid \ptype \in \ptypes}$ be a set of process
types disjoint from $\ptypes$.
A generic $\halve{\ptype}$ is defined below from $\ptype \in \ptypes$.
Intuitively, $\halve{\ptype}$ simulates the behavior of $\ptype$
in the context of communication through routing nodes:
each observable transition is split into two halves,
one sending a request, then eventually receiving the acknowledgement.
\begin{compactitem}[-]
\item $\placeof{\halve{\ptype}} \isdef \placeof{\ptype} \uplus
  \set{\halfplace{t} \mid t \in \obstransof{\ptype}}$, each
  $\halfplace{t}$ is a new place associated
  with the transition $t$,
\item $\transof{\halve{\ptype}} \isdef \inttransof{\ptype} \uplus
  \set{\gttry{t}, \gtcommit{t} \mid t \in \obstransof{\ptype}}$, \ie
  each observable transition $t$ of $\ptype$ is split into
  $\gttry{t}$ (an attempt at firing $t$) and $\gtcommit{t}$
 ($t$ has been fired remotely),
\item $\weightof{\halve{\ptype}} \isdef
  \proj{\weightof{\ptype}}{\placeof{\ptype} \times \inttransof{\ptype}
    \uplus \inttransof{\ptype} \times \placeof{\ptype}} \uplus
  \set{(q,\gttry{t},1),(\gttry{t},\halfplace{t},1) \mid q \in
    \placeof{\ptype},~ t \in \obstransof{\ptype},~
    \weightof{\ptype}(q,t)=1} \uplus
  \set{(\halfplace{t},\gtcommit{t},1),(\gtcommit{t},q,1) \mid t \in
    \obstransof{\ptype},~ \weightof{\ptype}(t,q)=1}$, \ie whenever $t$
  is an observable transition from $q$ to $q'$, there are observable
  transitions $\gttry{t}$ from $q$ to $\halfplace{t}$ and
  $\gtcommit{t}$ from $\halfplace{t}$ to $q'$ in $\halve{\ptype}$.
\item $\initmarkof{\halve{\ptype}} \isdef \initmarkof{\ptype}$. 
\end{compactitem}
Second, for each observable transition $t \in \obstransof{\ptypes}$,
we consider a new process type $\epsilon_t$, defined below.
See \figref{fig:halfprocs} (c) for an example.
This is what we call the ``routing nodes''.
\begin{compactitem}[-]
\item $\placeof{\epsilon_t} \isdef \set{\gtidle_t, \gtactive_t, \gtwait_t,
  \gtreply_t}$,
\item $\transof{\epsilon_t} = \obstransof{\epsilon_t} \isdef
  \set{\gtrecv, \gtfwd, \gtack, \gtreset, t}$,
\item $\weightof{\epsilon_t}$ consists of the following edges, all of weight $1$:
  \begin{align*}
    \prepost{\gtrecv} \isdef & (\gtidle_t, \gtactive_t) \hspace*{4mm} \prepost{\gtfwd} \isdef (\gtactive_t, \gtwait_t)
    \hspace*{4mm} \prepost{t} \isdef (\gtactive_t, \gtreply_t) \\
    \prepost{\gtack}  \isdef & (\gtwait_t, \gtreply_t) \hspace*{4mm} \prepost{\gtreset} \isdef (\gtreply_t, \gtidle_t)
  \end{align*}
\item $\initmarkof{\epsilon_t}(\gtidle_t) \isdef 1$ and
  $\initmarkof{\epsilon_t}(\gtactive_t) =
  \initmarkof{\epsilon_t}(\gtwait_t) = \initmarkof{\epsilon_t}(\gtreply_t)
  \isdef 0$.
\end{compactitem}
Finally, $\ptypes^\expand \isdef \halve{\ptypes} \cup \set{\epsilon_t
  \mid t \in \obstransof{\ptype}}$,
  ``real vertices'' and ``routing nodes'' respectively.
\begin{example}\label{ex:halves}
  \figref{fig:halfprocs} (a) shows examples of $\ptype$, $\halve{\ptype}$,
  and $\epsilon_t$.
  Instances of $\epsilon_t$ exist for
  $t \in \obstransof{\ptype} = \set{\mathsf{up},\mathsf{dn}}$.
  Two processes of type $\ptype$ communicate via a
  $(\mathsf{up},\mathsf{dn})$ transition, which is encoded by a
  sequence of transitions, see \figref{fig:routing-gadget}:
  {\small\[\begin{array}{l}
  (\underbrace{\gttry{\mathsf{up}}}_{\halve{\ptype}}, \underbrace{\gtrecv),
    (\gtfwd,\gtrecv), \ldots, (\gtfwd,\gtrecv),
    (\mathsf{up}}_{\epsilon_{\mathsf{up}}},
  \underbrace{\mathsf{dn}}_{\epsilon_{\mathsf{dn}}}), 
  \underbrace{(\gtreset,\gtack), \ldots,
    (\gtreset,\gtack), 
    (\gtreset}_{\epsilon_{\mathsf{up}}},
  \underbrace{\gtcommit{\mathsf{up}}}_{\halve{\ptype}})
  \end{array}\]}
  This matches the earlier intuition:
  the leftmost process tries to execute $\mathsf{up}$
  and sends a request through routers $\epsilon_\mathsf{up}$.
  At the same time, the right process tries to execute
  $\mathsf{dn}$ and sends a request through routers $\epsilon_\mathsf{dn}$.
  When both requests reach adjacent routers,
  the transition $(\mathsf{up},\mathsf{dn})$ is jointly executed.
  Following that, an acknowledgement is sent back to the processes that initiated
  the communication as a sequence of $(\gtreset,\gtack)$.
\end{example}
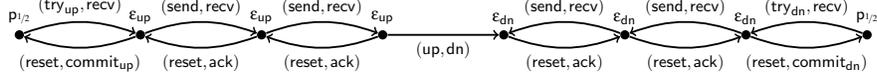
\begin{figure}[t!]
  \vspace*{-\baselineskip}
  \scalebox{0.7}{
  \hspace{1.5em}\begin{tikzpicture}
    \tikzset{
      arr/.style={->,thick,line width=1pt},
      proc/.style={circle,fill=black,inner sep=0pt,minimum width=5pt},
    }
    \pgfmathsetmacro\horiz{2.3}
    \pgfmathtruncatemacro\nb{2}
    \foreach \i in {0,...,\nb} {
      \node[proc,label=90:{$\epsilon_\mathsf{up}$}] (el\i) at (-\horiz*\i - 0.5*\horiz,0) {};
      \node[proc,label=90:{$\epsilon_\mathsf{dn}$}] (er\i) at (\horiz*\i + 0.5*\horiz,0) {};

    }
    \node[proc,label=90:{$\halve{\ptype}$}] (pl) at ($(el\nb) - (\horiz,0)$) {};
    \node[proc,label=90:{$\halve{\ptype}$}] (pr) at ($(er\nb) + (\horiz,0)$) {};
    \foreach \i in {1,...,\nb} {
      \pgfmathtruncatemacro\prev{\i - 1}
      \draw[arr] (el\i) to[bend left=20] node[above]{$(\gtfwd,\gtrecv)$} (el\prev);
      \draw[arr] (el\prev) to[bend left=20] node[below]{$(\gtreset,\gtack)$} (el\i);
      \draw[arr] (er\i) to[bend right=20] node[above]{$(\gtfwd,\gtrecv)$} (er\prev);
      \draw[arr] (er\prev) to[bend right=20] node[below]{$(\gtreset,\gtack)$} (er\i);
    }
    \draw[arr] (pl) to[bend left=20] node[above]{$(\gttry{\mathsf{up}},\gtrecv)$} (el\nb);
    \draw[arr] (el\nb) to[bend left=20] node[below]{$(\gtreset,\gtcommit{\mathsf{up}})$} (pl);
    \draw[arr] (pr) to[bend right=20] node[above]{$(\gttry{\mathsf{dn}},\gtrecv)$} (er\nb);
    \draw[arr] (er\nb) to[bend right=20] node[below]{$(\gtreset,\gtcommit{\mathsf{dn}})$} (pr);
    \draw[arr] (el0) -- node[below]{$(\mathsf{up},\mathsf{dn})$} (er0);
  \end{tikzpicture}
  }

  \vspace*{-\baselineskip}
  \caption{An $\epsilon_{\mathsf{up}}$-path meeting an $\epsilon_{\mathsf{dn}}$-path
    to simulate the interaction $(\mathsf{up},\mathsf{dn})$ between two
    vertices of type $\ptype$ from Figure \ref{fig:halfprocs} (a).}
  \label{fig:routing-gadget}
  \vspace*{-\baselineskip}
\end{figure}

The variable labeling $\syslab : \placeof{\ptypes} \rightharpoonup
\vars$ induces the following labeling $\syslab^\expand :
\placeof{\ptypes^\expand} \rightharpoonup \vars$: \begin{align*}
  \syslab^\expand(q) \isdef & ~\syslab(q) \text{, for each } q \in \placeof{\ptypes} \\
  \syslab^\expand(\gtactive_t) \isdef & ~\syslab(\pre{t}) \hspace*{8mm}
  \syslab^\expand(\gtreply_t) \isdef \syslab(\post{t})\text{, for each } t \in \obstransof{\ptypes}  \\
  & \text{undefined everywhere else}
\end{align*}
The rest of the paper is the proof of \thmref{thm:main},
instanciated with these $\ptypes^\expand$ and $\syslab^\expand$.

\subsection{Proof of \thmref{thm:main}}

The main idea is to use the $\epsilon_t$ processes to define, for a
\vrtext-term $\theta$, a \hrtext-term $\expand(\theta)$ that evaluates
to a sparse system, from which the system $\val{\vr}{\theta}$ can be
retrived using graph expansion, \ie $\val{\vr}{\theta} =
\expof{\val{\hr}{\expand(\theta)}}$ (\defref{def:expansion}).
Using the set of port labels
\(
\palpha^\expand \isdef \set{\halve{\plab}, (\plab,t),
  \overline{(\plab,t)} \mid \plab\in\palpha,
  t\in\obstransof{\ptypeof{}(\plab)}}
\)
where the process type of $\halve{\plab}$ is
$\halve{(\ptypeof{}(\plab))}$ and the process type of $(\plab,t)$ and
$\overline{(\plab,t)}$ is $\epsilon_t$,
we define $\expand(\theta)$ inductively on the structure of $\theta$.
That is, in addition to a
source label $\halve{\plab}$ for each port label $\plab \in \palpha$
that occurs in $\theta$, we add two copies of each port label for each
observable transition.  Intuitively, $(\plab,t)$ will be the port
label denoting the root of the tree of $\epsilon_t$ processes, and
$\overline{(\plab,t)}$ will be used as backup when we need a fresh
copy of $(\plab,t)$. We define a renaming function $\mathsf{fresh} :
\palpha^\expand \partial \palpha^\expand$ by
$\mathsf{fresh}(\overline{(\plab,t)}) = (\plab,t)$, and undefined
everywhere else. That way $\mathsf{fresh}$ substitutes each source for
its copy and forgets the original one.
We also use the renaming function
$\mathsf{obs} : \palpha^\expand \partial \palpha^\expand$
which is the identity on $\palpha\times\obstransof{\ptypes}$
and undefined everywhere else,
\ie $\mathsf{obs}$ forgets every port label except the $(\plab,t)$ labels.

Concretely, creating new edges between $\epsilon_t$-vertices is done by the
below functions:
\vspace*{-0.5\baselineskip}
\[\begin{array}{rll}
E^t_\epsilon(\plab,\plab') \isdef &~ \edge{(\gtfwd,\gtrecv)}{(\plab,t)}{\overline{(\plab',t)}} \hrpop{}{} \edge{(\gtreset,\gtack)}{\overline{(\plab',t)}}{(\plab,t)}
&
\raisebox{-5mm}{\scalebox{0.6}{
  \begin{tikzpicture}
    \tikzset{
      proc/.style={circle,fill=black,inner sep=0pt,minimum width=3pt},
      arr/.style={->,line width=0.3pt},
    }
    \node[proc,label=-180:{$(\plab,t)$}] (p1) at (0,0) {};
    \node[proc,label=0:{$\overline{(\plab',t)}$}] (p2) at (2,0) {};
    \draw[arr] (p1) to[bend left=20] node[above]{$(\gtfwd,\gtrecv)$} (p2);
    \draw[arr] (p2) to[bend left=20] node[below]{$(\gtreset,\gtack)$} (p1);
  \end{tikzpicture}
}}
\\
\halve{E}^t(\plab) \isdef &~ \edge{(\gttry{t},\gtrecv)}{\halve{\plab}}{(\plab,t)} \hrpop{}{} \edge{(\gtreset,\gtcommit{t})}{(\plab,t)}{\halve{\plab}}
&
\raisebox{-5mm}{\scalebox{0.6}{
  \begin{tikzpicture}
    \tikzset{
      proc/.style={circle,fill=black,inner sep=0pt,minimum width=3pt},
      arr/.style={->,line width=0.3pt},
    }
    \node[proc,label=-180:{$\halve{\plab}$}] (p1) at (0,0) {};
    \node[proc,label=0:{$(\plab,t)$}] (p2) at (2,0) {};
    \draw[arr] (p1) to[bend left=20] node[above]{$(\gttry{t},\gtrecv)$} (p2);
    \draw[arr] (p2) to[bend left=20] node[below]{$(\gtreset,\gtcommit{t})$} (p1);
  \end{tikzpicture}
}}
\end{array}\]
Recall that the main invariant during the translation is that
whenever $v$ in the \vrtext{} system is a $\plab$-port,
its corresponding vertex in the \hrtext{} system is linked to the unique
representative of $(\plab,t)$ through a chain of routers $\epsilon_t$.

An essential gadget used by the construction is the encoding of a
relabeling function $\alpha : \palpha \partial \palpha$
(which is by definition type-preserving) by a graph
denoted $\mathsf{enc}(\alpha)$.  We show an example of
$\mathsf{enc}(\alpha)$, where $\alpha$ is the function $[\plab_1
  \mapsto \plab_2, \plab_2 \mapsto \plab_1, \plab_3 \mapsto \plab_4]$.
So as to not clutter the figure, we simply write $E_\epsilon$ to label
the entire bundle of edges $\set{E^t_\epsilon \mid t\in
  \obstransof{\ptypes}}$ and we write $(\plab,\_)$ for $\set{(\plab,t)
  \mid t \in \obstransof{\ptypeof{}(\plab)}}$.

\vspace*{-0.5\baselineskip}
\begin{minipage}{0.5\textwidth}
  \begin{align*}
    \mathsf{enc}(\alpha) \isdef
    \bighrpop{\begin{array}{c}
        \scriptscriptstyle{\plab,\plab' \in \palpha} \\[-2mm]
        \scriptscriptstyle{\alpha(\plab) = \plab'}
    \end{array}}
    \bighrpop{\begin{array}{c}
        \scriptscriptstyle{t \in \obstransof{\ptypeof{}(\plab)}}
        % \scriptscriptstyle{\ptype \text{ type of } \plab}
    \end{array}}
    E^t_\epsilon(\plab,\plab')
  \end{align*}
\end{minipage}
\begin{minipage}{0.5\textwidth}
  \begin{center}
  % relabel encoder
  \scalebox{0.7}{
  \begin{tikzpicture}
    \tikzset{
      proc/.style={circle,fill=black,inner sep=0pt,minimum width=3pt},
      arr/.style={->,line width=0.3pt},
      brace/.style={decorate,decoration={brace,mirror,raise=3pt}},
    }
    \pgfmathsetmacro\step{0.9}
    \pgfmathsetmacro\mini{0.11}
    \foreach \pid/\pos/\label in {1/1/1,2/2/2,3/3/3} {
      \node[proc] (p\pid-t1l) at (0,\pos*\step + 1*\mini) {};
      \node[proc] (p\pid-t2l) at (0,\pos*\step + 2*\mini) {};
      \node[proc] (p\pid-t3l) at (0,\pos*\step + 3*\mini) {};
      \node[proc] (p\pid-t4l) at (0,\pos*\step + 4*\mini) {};
      \draw[brace] ($(p\pid-t4l) + (0,0.5*\mini)$) to node[left=5pt]{$(\plab_\label,\_)$} ($(p\pid-t1l) + (0,-0.5*\mini)$);
    }

    \foreach \pid/\pos/\prev/\label in {1/1/2/1,2/2/1/2,3/3/3/4} {
      \node[proc] (p\pid-t1r) at (3,\pos*\step + 1*\mini) {};
      \node[proc] (p\pid-t2r) at (3,\pos*\step + 2*\mini) {};
      \node[proc] (p\pid-t3r) at (3,\pos*\step + 3*\mini) {};
      \node[proc] (p\pid-t4r) at (3,\pos*\step + 4*\mini) {};
      \draw[brace] ($(p\pid-t1r) + (0,-0.5*\mini)$) to node[right=5pt]{$\overline{(\plab_\label,\_)}$} ($(p\pid-t4r) + (0,0.5*\mini)$);

      \draw[arr] (p\prev-t1l) to (p\pid-t1r);
      \draw[arr] (p\prev-t2l) to (p\pid-t2r);
      \draw[arr] (p\prev-t3l) to (p\pid-t3r);
      \draw[arr] (p\prev-t4l) to node[above,pos=0.95]{$E_\epsilon$}
                                  (p\pid-t4r);
    }
  \end{tikzpicture}
  }
  \end{center}
\end{minipage}
Intuitively, as the renaming $\alpha$
turns every $\plab$-port into an $\alpha(\plab)$-port,
$\mathsf{enc}(\alpha)$ links the previous representatives of $(\plab,\_)$
to the new representatives of $(\alpha(\plab),\_)$, preserving the invariant.
The translation $\expand$ is defined inductively on
the structure of the term $\theta$:
\[\begin{array}{rll}
  \expand(\svertex{\plab}{}) \isdef & ~\relab{\mathsf{obs}}
  \Big(\svertex{\halve{\plab}} ~\hrpop{}{}~ \Big(\bighrpop{
    \begin{array}{c}
      \scriptscriptstyle{t\in\obstransof{\ptypeof{}(\plab)}} \\[-2mm]
    \end{array}
  }
    \halve{E}^t(\plab)\Big)\Big)
  & \text{ \figref{fig:h} (a)}
  \\
  \expand(\addedge{(t,t')}{\plab}{\plab'}{}(\theta_1)) \isdef &
  ~\edge{(t,t')}{(\plab,t)}{(\plab',t')}{} \hrpop{}{} \expand(\theta_1)
  & \text{ \figref{fig:h} (b)}
  \\
  \expand(\relab{\alpha}{}(\theta_1)) \isdef & ~\relab{\mathsf{fresh}}{}\Big(
  \expand(\theta_1) \hrpop{}{} \mathsf{enc}(\alpha)\Big)
  & \text{ \figref{fig:h} (c,c',c'')}
  \\
  \expand(\theta_1 \vrpop{\plabs_1}{\plabs_2} \theta_2) \isdef &
     \relab{\mathsf{fresh}}{}\left(
       \expand(\theta_1)
       \hrpop{}{}
       \mathsf{enc}(\fnid_{\plabs_1\cap\plabs_2})
     \right) \\
     & \hrpop{}{}
       \relab{\mathsf{fresh}}{}\left(
       \expand(\theta_2)
       \hrpop{}{}
       \mathsf{enc}(\fnid_{\plabs_1\cap\plabs_2})
     \right)
  & \text{ \figref{fig:h} (d,d',d'')}
\end{array}\]
\input{figure-h}
These are quite straightforward once we understand the invariant:
the translation of $\svertex{\plab}$ creates a new vertex
and links it to all relevant routing nodes,
the rule for $\addedge{(t,t')}{\plab}{\plab'}$ establishes a communication
between the representatives of $(\plab,t)$ and $(\plab',t')$,
the one for $\vrpop{}{}$ creates a unique common representative,
and the relabeling is as explained earlier.

The relation between the systems $\val{\vr}{\theta}$ and
$\val{\hr}{\expand(\theta)}$ is captured by the below lemma. Its proof
is the same as in~\cite[Proposition 2.4]{COURCELLE1995275}, where the
new edges have the same label, instead of the following sets of edge
labels (\defref{def:expansion}): \begin{align*} \expalpha \isdef &
  ~\set{(\gttry{t},\gtrecv) \mid t \in \obstransof{\ptypes}} \cup
  \set{(\gtfwd,\gtrecv)} \\ \overleftarrow{\expalpha} \isdef &
  ~\set{(\gtreset,\gtcommit{t}) \mid t \in \obstransof{\ptypes}} \cup
  \set{(\gtreset,\gtack)}
\end{align*}
Here we denote $(\gtcommit{t},\gtreset) \isdef
\overleftarrow{(\gttry{t},\gtrecv)}$ and
$(\gtack,\gtreset)\isdef\overleftarrow{(\gtfwd,\gtrecv)}$
(\defref{def:expansion}). The meaning and orientation of these edges
are shown on an example in \figref{fig:routing-gadget}.

\begin{lemma}\label{lemma:expansion}
  Let $\theta$ be a ground \vrtext-term, 
  $\asys \isdef \val{\vr}{\theta}$ and $\asys' \isdef \val{\hr}{\expand(\theta)}$.
  Then, $\asys = \expof{\asys'}$. For each $v,v' \in \vertof{\asys'}$,
  if $E \neq \emptyset$ is the set of edges from $v$ to $v'$, then exactly one of the following holds: \begin{compactenum}
  \item $(\procof{\asys'}(v), \procof{\asys'}(v')) =
    (\halve{\ptype}, \epsilon_t)$,
    $E = \set{(\gttry{t},\gtrecv), (\gtcommit{t},\gtreset)}$ and
    $t \in \obstransof{\ptype}$;
  \item $(\procof{\asys'}(v), \procof{\asys'}(v')) = (\epsilon_t,\epsilon_{t'})$
    and $E = \set{(t,t')}$;
  \item $\procof{\asys'}(v) = \procof{\asys'}(v') = \epsilon_t$
    and $E = \set{(\gtfwd,\gtrecv), (\gtack,\gtreset)}$.
    %
    %% \item there are no other possibilities
    %%   (in particular $(\procof{\asys'}(v), \procof{\asys'}(v')) = (\halve{\ptype}, \halve{\ptype'})$
    %%   implies $I = \emptyset$).
  \end{compactenum}
\end{lemma}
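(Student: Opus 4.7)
The plan is to proceed by structural induction on the ground \vrtext-term $\theta$, simultaneously establishing $\asys = \expof{\asys'}$ and the three-case edge classification. The induction hypothesis must be strengthened to maintain a \emph{routing-tree invariant}: for each port label $\plab$ in the sort of the current subterm and each $t\in\obstransof{\ptypeof{}(\plab)}$, the unique $(\plab,t)$-source of $\val{\hr}{\expand(\theta)}$ is the root of a tree of $\epsilon_t$-vertices whose leaves are exactly the $\halve{\plab}$-vertices produced from the $\plab$-ports of the corresponding subterm of $\theta$, with internal edges of the form of case (3) and the leaf edges of the form of case (1). This invariant, combined with the definitions of $\expalpha$ and $\overleftarrow{\expalpha}$, is exactly what guarantees that expansion collapses every routing tree to its leaf set.

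For the base case $\theta=\svertex{\plab}$, the term $\expand(\svertex{\plab})$ creates a single $\halve{\plab}$-leaf and one $(\plab,t)$-root per $t\in\obstransof{\ptypeof{}(\plab)}$ linked by $\halve{E}^t$; all edges are of case (1), the tree invariant is trivial, and the expansion is a single vertex matching $\val{\vr}{\svertex{\plab}}$. For edge creation $\addedge{(t,t')}{\plab}{\plab'}$, the new edge is the case-(2) edge $(t,t')$ between the current $(\plab,t)$-root and $(\plab',t')$-root; by the routing-tree invariant, after expansion it lifts to an $(t,t')$-edge between every pair formed by a $\plab$-port and a $\plab'$-port of $\val{\vr}{\theta_1}$, which matches the \vrtext-semantics of $\addedge{(t,t')}{\plab}{\plab'}$.

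The cases of $\relab{\alpha}$ and $\vrpop{}{}$ are where the gadget $\mathsf{enc}(\alpha)$ does the essential work and where the most care is needed. For $\relab{\alpha}$, the gadget grafts a fresh $\overline{(\alpha(\plab),t)}$-root above each old $(\plab,t)$-root via a case-(3) $E^t_\epsilon$-edge, and the outer $\relab{\mathsf{fresh}}$ demotes $\overline{(\alpha(\plab),t)}$ to the canonical name $(\alpha(\plab),t)$ while erasing the old $(\plab,t)$-labels from the interface; this extends each routing tree by exactly one internal level without creating cross-edges and, after expansion, exactly reproduces the effect of $\relab{\alpha}$ on the port labels of $\val{\vr}{\theta_1}$. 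For disjoint union, applying $\mathsf{enc}(\fnid_{\plabs_1\cap\plabs_2})$ on each side before $\hrpop{}{}$ causes the composition to identify the two fresh roots attached to any shared port label, so that the two previously separate routing trees on that label are joined under a single new root; again $\mathsf{fresh}$ restores canonical source names, the new internal edges are of case (3), and expansion recovers the disjoint union of the two \vrtext-graphs with their shared ports already identified by the \vrtext-semantics.

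The main obstacle, as in \cite[Proposition 2.4]{COURCELLE1995275}, is bookkeeping: one must verify that the expansion deletes exactly the new non-root $\epsilon_t$-vertices introduced along the way, that no step ever produces edges falling outside the three listed cases (in particular, no $\halve{\ptype}$-to-$\halve{\ptype'}$ edges and no case-(2) edges except those emitted by $\addedge{}{}{}$), and that $\mathsf{fresh}$ correctly hides the intermediate $\overline{(\plab,t)}$-sources so that every inductive step begins from the canonical form required by the invariant. The adaptation over \cite{COURCELLE1995275} consists solely in tracking the pairing of forward edges in $\expalpha$ with their reverses in $\overleftarrow{\expalpha}$, as dictated by the observable transitions of $\halve{\ptypes}$ and $\epsilon_t$; once this pairing is respected in every construction, the three edge-cases are immediate from inspection of the defining clauses of $\expand$.
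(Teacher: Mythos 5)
Your proposal is correct and follows essentially the same route as the paper: the paper's proof delegates the identity $\asys = \expof{\asys'}$ to the structural induction of \cite[Proposition 2.4]{COURCELLE1995275} and obtains the three-case edge classification ``from the definition of $\expand$'', which is exactly the case-by-case inspection of the clauses of $\expand$ that you carry out. Your explicit routing-tree invariant is the strengthening that the paper leaves implicit in the citation, so the two arguments coincide in substance.
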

\begin{proof}
  The proof of $\asys = \expof{\asys'}$ is the same as \cite[Proposition 2.4]{COURCELLE1995275}.
  The rest of the points follow from the definition of $\expand$. 
  \qed
\end{proof}

The behaviors of $\val{\vr}{\theta}$ and $\val{\hr}{\expand(\theta)}$
are related by the following notion of path stuttering 
(inspired by~\cite[Definition 7.89]{DBLP:books/daglib/0020348}):

\begin{definition}\label{def:trace-equivalence}
  Let $\asys \in \systems{\ptypes}$ and $\asys' \in
  \systems{\ptypes^\expand}$ be two systems and $\syslab : \placeof{\ptypes}
  \rightharpoonup \vars$ be a variable labeling. Given two
  finite or infinite firing sequences
  $\rho \in \pathsof{\behof{\asys}}$ and
  $\rho' \in \pathsof{\behof{\asys'}}$,
  we say that $\rho'$ is a stuttering of $\rho$,
  and write $\rho \stutter_\syslab \rho'$,
  when there exists a finite or infinite sequence
  $A_0,A_1,\ldots \in \atoms^\infty$
  and integers $n_0, n_1, \ldots \geq 1$ such
  that: \begin{align*} \atomsof{\rho}{\syslab\circ\proj{}{1}} = & ~
    A_0,A_1,...
    \hspace*{4mm}\text{ and }\hspace*{4mm}
    \atomsof{\rho'}{\syslab^\expand\circ\proj{}{1}} =
    ~\underbrace{A_0, \ldots, A_0,}_{n_0 \text{ times }}
    \underbrace{A_1, \ldots, A_1,}_{n_1 \text{ times}} \ldots
  \end{align*}
  This relation is lifted to systems as $\asys \stutter_\syslab \asys'$,
  if and only if
  the following hold: \begin{compactenum}
  \item for each $\rho \in \pathsof{\behof{\asys}}$ there exists
    $\rho' \in \pathsof{\behof{\asys'}}$ such that $\rho
    \stutter_\syslab \rho'$,
  \item for each $\rho' \in \pathsof{\behof{\asys'}}$ there exists
    $\rho \in \pathsof{\behof{\asys}}$ such that $\rho
    \stutter_\syslab \rho'$.
  \end{compactenum}
\end{definition}

In particular $\asys \stutter \asys'$ implies that $\asys$ and $\asys'$
have sets of reachable configurations that satisfy the same atoms.

\begin{theorem}\label{thm:stutter-equiv}
  Let $\asys \in \systems{\ptypes}$ and $\asys' \in
  \systems{\ptypes^\expand}$ be systems and $\syslab :
  \placeof{\ptypes} \rightharpoonup \vars$ be a variable labeling such
  that $\asys \stutter_\syslab \asys'$. Then, for any arithmetic
  formula $\alpha$, we have $$(\asys,\syslab) \models \alpha \iff
  (\asys',\syslab^\expand) \models \alpha$$
\end{theorem}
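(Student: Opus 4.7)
The plan is to unfold both sides of the biconditional using the definition of the satisfaction relation $(\cdot,\cdot)\models\alpha$, which in both cases reduces to the existence of a \emph{finite} firing sequence of the corresponding behavior that ends at a marking whose atom set contains $\alpha$. The stuttering hypothesis $\asys \stutter_\syslab \asys'$ is precisely designed to transport such witnesses between $\behof{\asys}$ and $\behof{\asys'}$, since by \defref{def:trace-equivalence} it enforces that related paths produce the same sequence of atom sets, modulo consecutive repetition.

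For the forward implication, suppose $(\asys,\syslab)\models\alpha$ and let $\rho$ be a finite firing sequence of $\behof{\asys}$ from $\initmarkof{\behof{\asys}}$ to some marking $\amark$ with $\alpha\in\atomsof{\amark}{\syslab\circ\proj{}{1}}$. Since $\rho\in\pathsof{\behof{\asys}}$, condition (1) of \defref{def:trace-equivalence} yields $\rho'\in\pathsof{\behof{\asys'}}$ with $\rho\stutter_\syslab\rho'$. Writing $\atomsof{\rho}{\syslab\circ\proj{}{1}}=A_0,\ldots,A_n$ (with $\alpha\in A_n$), the stuttering clause forces $\atomsof{\rho'}{\syslab^\expand\circ\proj{}{1}}$ to be of the shape $A_0^{n_0},\ldots,A_n^{n_n}$ for some integers $n_i\ge 1$; in particular $\rho'$ is itself finite, and its last marking has atom set $A_n$, which contains $\alpha$, witnessing $(\asys',\syslab^\expand)\models\alpha$.

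The converse direction is entirely symmetric, using condition (2) of \defref{def:trace-equivalence}: starting from a finite witness $\rho'$ for $(\asys',\syslab^\expand)\models\alpha$, we obtain some $\rho\in\pathsof{\behof{\asys}}$ with $\rho\stutter_\syslab\rho'$; the stuttering presentation again forces $\rho$ to be finite and its last atom set to equal the last atom set of $\rho'$, which contains $\alpha$ by assumption.

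No real obstacle arises here: the theorem is essentially bookkeeping on atom sequences, with the heavy lifting entirely absorbed into the definition of $\stutter_\syslab$. The genuine work, which this corollary will feed into in the remainder of the proof of \thmref{thm:main}, is to establish $\val{\vr}{\theta}\stutter_{\syslab}\val{\hr}{\expand(\theta)}$ starting from the structural correspondence of \lemref{lemma:expansion}; by contrast, the present step is immediate once $\stutter_\syslab$ has been defined in this atom-preserving way.
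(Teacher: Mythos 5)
Your proof is correct and follows exactly the route the paper intends: the paper gives no explicit proof of \thmref{thm:stutter-equiv}, treating it as an immediate consequence of \defref{def:trace-equivalence} (cf.\ the remark that $\stutter_\syslab$ forces related systems to reach markings satisfying the same atoms), and your argument simply spells out that bookkeeping. The only detail worth noting is that you correctly observe the finiteness transfer in both directions (each $n_i \geq 1$ guarantees a finite stuttered sequence corresponds to a finite original and vice versa), which is the one point that actually needs saying.
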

\noindent Finally, applying \thmref{thm:stutter-equiv}
to the lemma below (proof in \appref{app:trace-equiv})
gives \thmref{thm:main}.

\begin{lemma}\label{lemma:trace-equiv}
  Let $\theta$ be a \vrtext-term over the alphabets $\valpha \isdef
  \ptypes \uplus \palpha$ and $\ealpha \isdef
  \obstransof{\ptypes} \times \obstransof{\ptypes}$, and
  $\syslab : \placeof{\ptypes} \rightarrow \vars$ be a variable
  labeling. Then $\val{\vr}{\theta} \stutter_\syslab \val{\hr}{\expand(\theta)}$.
\end{lemma}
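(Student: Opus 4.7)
The plan is to prove both directions of stuttering equivalence using the structural characterization of $\val{\hr}{\expand(\theta)}$ from \lemref{lemma:expansion}: the ``real'' vertices of types in $\halve{\ptypes}$ are in bijection with the vertices of $\val{\vr}{\theta}$, and are interconnected through trees of routing nodes of types $\epsilon_t$. The key technical fact underlying both directions is an \emph{atom invariance} observation: among all transitions of $\behof{\val{\hr}{\expand(\theta)}}$, only the central transitions $(t,t')$ between paired routers and the internal transitions $(t,v)$ inside real vertices can alter the set of satisfied atoms. By construction of $\syslab^\expand$, every routing transition $(\gtfwd,\gtrecv)$ moves a token from a $\gtactive_t$ place (labelled $\syslab(\pre{t})$) of one router to the $\gtactive_t$ of an adjacent router (same label), and $(\gtreset,\gtack)$ similarly preserves $\syslab(\post{t})$; the boundary transitions $(\gttry{t},\gtrecv)$ and $(\gtreset,\gtcommit{t})$ exchange tokens between $\pre{t}$ (resp.\ $\post{t}$) in a $\halve{\ptype}$ vertex and $\gtactive_t$ (resp.\ $\gtreply_t$) in the adjacent router, both carrying the same variable label. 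Hence every routing transition has zero net effect on any variable counter.

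For the forward direction $\val{\vr}{\theta} \to \val{\hr}{\expand(\theta)}$, given a firing sequence $\rho$ of $\behof{\val{\vr}{\theta}}$, I build $\rho'$ inductively: internal transitions $(t,v)$ translate unchanged, and each observable edge $v_1 \arrow{(t_1,t_2)}{} v_2$ of $\rho$ expands into the gadget sketched in \figref{fig:routing-gadget}---propagating a request from $v_1$ via $(\gttry{t_1},\gtrecv)$ followed by zero or more $(\gtfwd,\gtrecv)$, symmetrically from $v_2$, firing the central $(t_1,t_2)$ between the meeting routers, and then symmetric acknowledgements via $(\gtreset,\gtack)$ and $(\gtreset,\gtcommit{t_i})$. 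Enabledness at each step is guaranteed by executing gadgets one at a time, so that all routers used by the current gadget sit on $\gtidle_t$ at its start. Combined with atom invariance, this yields an atom sequence matching the one of $\rho$ with each $A_i$ repeated by the length of the $i$-th gadget, \ie $\rho \stutter_\syslab \rho'$.

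For the backward direction, given a firing sequence $\rho'$ of $\behof{\val{\hr}{\expand(\theta)}}$, I project it to $\rho$ over $\behof{\val{\vr}{\theta}}$ by keeping internal transitions $(t,v)$ and translating each central $(t,t')$ fired between routers $r,r'$ into the observable edge of $\val{\vr}{\theta}$ joining the two real vertices $v,v'$ whose pending half-requests reached $r$ and $r'$; uniqueness of this pair follows from the forest structure of the routing trees (\lemref{lemma:expansion}) together with the deterministic cycle $\gtidle_t \to \gtactive_t \to (\gtwait_t \text{ or } \gtreply_t) \to \gtidle_t$ of each router's token. Validity of $\rho$ as a firing sequence is established via a routing invariant: the VR marking obtained from the HR marking by ``uncommitting'' every in-flight half-transition (moving every $\halfplace{t}$-token back to $\pre{t}$, and forgetting routers' tokens) always satisfies the same atoms as the HR marking, so the preconditions of each emitted VR transition are met and the atoms collapse correctly.

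The main obstacle I anticipate is the backward direction, specifically tying each central $(t,t')$ firing to a unique pair of real vertices $v_1,v_2$ adjacent by an edge of label $(t_1,t_2)$ in $\val{\vr}{\theta}$. This requires an inductive invariant on the construction of $\expand$ that tracks, for each router, the real vertex whose request it currently carries, ensuring that any central transition must connect two routers attached to adjacent real vertices of matching port labels. Orphan routing traffic---requests initiated but never completing, or acknowledgements lagging behind the rest of $\rho'$---is harmless since it contributes only repeated atoms, and is simply absorbed into the stuttering when forming the VR projection.
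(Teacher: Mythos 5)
Your overall architecture is the same as the paper's: a forward simulation that expands each \vrtext{} interaction into one atomically executed routing gadget launched from an all-idle router configuration, and a backward simulation that projects away routing transitions, with per-transition invariance of the variable counters doing the atom bookkeeping (your counter-invariance observation is correct and arguably cleaner than the explicit bijection the paper uses in \factref{fact:sameap-sameap}). However, the most serious missing piece is hidden in your final remark that orphan routing traffic ``is simply absorbed into the stuttering.'' The multiplicities $n_0,n_1,\ldots \geq 1$ in \defref{def:trace-equivalence} are integers, hence finite, so an infinite suffix of $\rho'$ consisting only of $\expalpha\cup\overleftarrow{\expalpha}$-transitions cannot be absorbed into any block $A_k^{n_k}$: the projected $\rho$ would be finite while $\atomsof{\rho'}{\syslab^\expand\circ\proj{}{1}}$ is infinite. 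You therefore must prove that every maximal routing-only firing segment terminates. This is not immediate --- a single router can traverse the cycle $\gtidle\to\gtactive\to\gtwait\to\gtreply\to\gtidle$ arbitrarily often over a run --- and the paper devotes a dedicated argument to it (\factref{fact:sameap-epsfinite}), building a ranking function on markings, well defined because the $\expalpha$-edges form a forest, that strictly decreases on every routing transition. Nothing in your proposal supplies this.

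A second, more local, problem is that your ``routing invariant'' is false as stated. Rolling every $\halfplace{t}$-token back to $\pre{t}$ yields a marking that does \emph{not} satisfy the same atoms as the \hrtext{} marking in the phase after the central $(t,t')$ has fired but before the acknowledgement reaches the initiating vertex: during that phase the router chain holds a token in $\gtreply_t$, which $\syslab^\expand$ counts toward the variable of $\post{t}$, whereas your rolled-back token is counted toward $\pre{t}$, so the two counts for these variables differ by one. The correspondence has to be direction-sensitive --- an in-flight request counts as $\pre{t}$ while its chain ends in $\gtactive_t$ and as $\post{t}$ once it ends in $\gtreply_t$ --- which is precisely what conditions (ii) and (iii) of the paper's relation $\sameap$ encode. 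Your uncommit-to-$\pre{t}$ marking is adequate for arguing enabledness of the projected \vrtext{} transitions, and your counter-invariance argument can independently repair the atom-matching, but the invariant you actually state cannot carry the weight you place on it.
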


\noindent For the proof of \thmref{thm:main}, let $\grammar'$
be the \hrtext{} grammar obtained by replacing each occurrence of a
\vrtext{} operation $\aop$ from $\grammar$ by the \hrtext{} operation
$\expand(\aop)$. ``$\Rightarrow$'' If
$\prp{\grammar}{\alpha}{\syslab}{\ptypes}$ has a positive answer, there
exists a system $\asys \in \alangof{}{\grammar}$ such that
$(\asys,\syslab) \models \alpha$. Hence, there exists a ground
\vrtext-term $\theta$ such that $\val{\vr}{\theta}=\asys$ and $\theta$
is the result of a complete derivation of $\grammar$. By the
definition of $\grammar'$, $\expand(\theta)$ is a ground
\hrtext-term resulting from a complete derivation of
$\grammar'$. By \lemref{lemma:trace-equiv},
$\val{\vr}{\theta} \stutter_\syslab \val{\hr}{\expand(\theta)}$, hence
we obtain $(\val{\hr}{\expand(\theta)},\syslab^\expand) \models \alpha$,
by \thmref{thm:stutter-equiv} and
$\prp{\grammar'}{\phi}{\syslab^\expand}{\ptypes^\expand}$ has
a positive answer. The ``$\Leftarrow$'' direction uses a symmetric
argument. \qed

\begin{textAtEnd}[category=lemma2]
\subsection*{Proof of \lemref{lemma:trace-equiv}}
\label{app:trace-equiv}

In this entire section, let $\theta$ be a fixed ground
\vrtext-term written using the set $\valpha=\ptypes\uplus\palpha$ of
vertex labels. It is easy to check that $\expand(\theta)$ is a ground
\hrtext-term written using the set
$\valpha^\expand \isdef \ptypes^\expand \uplus \palpha^\expand$
of vertex labels, where the sets $\ptypes^\expand$ of process types and
$\palpha^\expand$ of ports have been defined above.
We write $\asys$ (\resp $\asys'$) for
$\val{\vr}{\theta}$ (\resp $\val{\hr}{\expand(\theta)}$).

Take a reachable marking $\amark$ of $\behof{\asys'}$,
and any transition $t \in \obstransof{\ptype}$.
We say that a vertex $v$ of type $\epsilon_t$ is \emph{idle}
if $\amark(\gtidle_t,v) = 1$.
We say that $v$ of type $\epsilon_t$ is \emph{waiting on $t$}
if $\amark(\gtwait_t,v) = 1$.
We say that $v$ of type $\halve{\ptype}$ is \emph{waiting} on $t$
if $\amark(\halfplace{t},v) = 1$.
Finally, given a label $\elab \in \expalpha$ and an edge
$(v', \elab, v) \in \edgeof{\asys'}$, we say that
$v'$ is an $\expalpha$-predecessor of $v$.

\begin{fact}\label{fact:sameap-reachchara}
  For any reachable marking $\amark$ of $\behof{\asys'}$,
  a vertex $v$ of type $\epsilon_t$ is non-idle in $\amark$
  if and only if there exists a $\expalpha$-predecessor of $v$
  that is waiting on $t$ in $\amark$.
  Furthermore if there exists such an $\expalpha$-predecessor,
  there will be exactly one.
\end{fact}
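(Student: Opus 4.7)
The plan is to proceed by induction on the length of a firing sequence leading from $\initmarkof{\behof{\asys'}}$ to the marking $\amark$. The base case is immediate: at the initial marking, $\amark_0(\gtidle_t,v) = 1$ and $\amark_0(\halfplace{t},w) = \amark_0(\gtwait_t,w) = 0$ for every vertex $w$, so both sides of the biconditional are false and the uniqueness claim holds vacuously.

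For the inductive step, fix a vertex $v$ of type $\epsilon_t$ and a transition $\tau$ with $\amark \fire{\tau} \amark'$. By \lemref{lemma:expansion}, the edges incident to $v$ have labels among $(\gttry{t},\gtrecv)$ and $(\gtreset,\gtcommit{t})$ (to $\halve{\ptype}$-neighbours), $(\gtfwd,\gtrecv)$ and $(\gtreset,\gtack)$ (to $\epsilon_t$-neighbours), and $(t,t')$ (to $\epsilon_{t'}$-neighbours). I would case-split on whether $\tau$ affects $v$'s state or the waiting-on-$t$ status of some $\expalpha$-predecessor of $v$. The key observation is that the transitions group into pairs. When $\tau$ fires an edge $u \to v$ labeled $(\gttry{t},\gtrecv)$ or $(\gtfwd,\gtrecv)$, then $v$ moves from $\gtidle_t$ to $\gtactive_t$ while simultaneously $u$ enters $\halfplace{t}$ or $\gtwait_t$; by the inductive hypothesis no predecessor of $v$ was previously waiting (since $v$ was idle), so $u$ becomes the unique waiting predecessor. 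Dually, when $\tau$ fires an edge $v \to u$ labeled $(\gtreset,\gtcommit{t})$ or $(\gtreset,\gtack)$, then $v$ moves from $\gtreply_t$ to $\gtidle_t$ while $u$ leaves its waiting place; by induction $u$ was the unique waiting predecessor, so afterwards none remain. The remaining transitions ($\gtfwd$, $\gtack$, or $t$ firing at $v$; any transition not touching $v$; and the $\halve{\ptype}$-predecessor firing a $\gttry{t'}$ with $t' \neq t$) either leave both sides of the biconditional unchanged or shuffle $v$ between non-idle places without altering the set of waiting predecessors, and the claim is preserved.

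The main subtlety is establishing that the pairing described above is indeed exhaustive, which amounts to checking that the only Petri-net transitions of $\epsilon_t$ that add a token to $\gtactive_t$ (or remove one from $\gtreply_t$) are precisely the ones synchronised with a transition that moves a unique $\expalpha$-predecessor into (resp.\ out of) its waiting place. This is exactly what the construction of $\halve{E}^t$ and $E^t_\epsilon$ forces: every firing of $\gtrecv$ at $v$ must be synchronised via some incoming $\expalpha$-edge with a $\gttry{t}$ or $\gtfwd$ at the source, and every firing of $\gtreset$ at $v$ must be synchronised via some outgoing $\overleftarrow{\expalpha}$-edge with a $\gtcommit{t}$ or $\gtack$ at the target. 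The uniqueness half of the statement then follows because a fresh predecessor can become waiting only at the exact moment $v$ ceases to be idle, so two simultaneously-waiting predecessors would contradict the inductive hypothesis at the step where the second one entered its waiting place.
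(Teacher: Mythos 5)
Your proof is correct and follows essentially the same route as the paper's: induction on the firing sequence from the initial marking, with a case analysis on the fired transition guided by the structural classification of edges in \lemref{lemma:expansion}, showing that $v$ becomes non-idle exactly when one predecessor becomes waiting and returns to idle exactly when that unique waiting predecessor is released. The only cosmetic difference is that you group the four routing interactions into two dual pairs where the paper treats them as four separate (but symmetric) cases.
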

\begin{proof}
  This proof is by induction. Since the property must hold for any reachable
  marking, we show that it holds for $\initmarkof{\behof{\asys'}}$,
  then that is it preserved by every firing of a transition.

  In $\initmarkof{\behof{\asys'}}$ all vertices of type $\epsilon_t$
  are in their initial state and thus have a token in $\gtidle_t$.
  All vertices of type $\ptype \in \ptypes$ have a token in a place
  of $\placeof{\ptypes}$, not a place from $\halfplace{(\transof{\ptypes})}$.
  Therefore in the initial configuration, non-idle processes do not exist,
  and no process is waiting.
  The property is thus easy to verify.

  For the inductive step, consider any transition that can occur in $\behof{\asys'}$,
  resulting in the firing $\amark \fire{\elab} \amark'$.
  The structural properties that we established in
  \lemref{lemma:expansion} guarantee that the following case analysis is exhaustive:
  \begin{compactitem}
  \item firing an internal transition $t$ in $v$:
    no vertex of type $\epsilon_t$ is involved, and the vertex $v$ in which
    the transition occurs is not waiting in either $\amark$ or $\amark'$
    since $\pre{t}$ and $\post{t}$ are both places of $\placeof{\ptypes}$
    not $\halfplace{(\transof{\ptypes})}$.
    Therefore the property holds in $\amark'$.
  \item firing $(v', (\gttry{t},\gtrecv), v)$:
    for this interaction to be fireable requires that
    $\amark(\pre{t},v') = 1$ ($v'$ is not waiting in $\amark$)
    and $\amark(\gtidle_t,v) = 1$ ($v$ is idle in $\amark$).
    The inductive hypothesis applied to the latter fact yields that
    all $\expalpha$-predecessors of $v$ are not waiting on $t$ in $\amark$,
    so when the firing makes $v$ non-idle and $v'$ waiting on $t$
    by $\amark'(\halfplace{t},v') = 1$ and $\amark'(\gtactive_t,v) = 1$,
    it makes $v'$ the unique $\expalpha$-predecessor of $v$ that is waiting on $t$
    in $\amark'$.
    Meanwhile $v'$ is not of type $\epsilon_t$ so is not subject to the condition,
    $v$ did not become waiting on $t$ so the possible $\expalpha$-successor
    of $v$ is not affected, and all other processes retain the same state
    so must satisfy the same property in $\amark'$ as they did in $\amark$.
    Therefore $\amark \fire{(v',(\gttry{t},\gtrecv),v)} \amark'$
    preserves the desired property.
  \item firing $(v', (\gtfwd,\gtrecv), v)$:
    in the same spirit as the previous case,
    this transition makes the token of $v'$ move from $\gtactive_t$ to $\gtwait_t$,
    so $v'$ remains non-idle and additionally becomes waiting on $t$ in $\amark'$,
    and at the same time the token of $v$ moves from $\gtidle_t$ to $\gtactive_t$,
    so $v$ becomes non-idle but remains non-waiting.

    Once again, $v$ has gained exactly one waiting $\expalpha$-predecessor,
    and all other vertices are unaffected,
    so $\amark'$ satisfies the property.
  \item firing $(v', (t',t), v)$:
    this time we must have $\amark(\gtactive_t,v') = \amark(\gtactive_{t'},v') = 1$,
    and $\amark(\gtreply_t,v') = \amark(\gtreply_{t'},v') = 1$.
    All vertices are non-idle in $\amark'$ if and only if they are non-idle
    in $\amark$,
    and they are waiting on $t$ in $\amark'$ if and only if they are waiting
    on $t$ in $\amark$.
    The desired property is thus easily preserved.
  \item firing $(v', (\gtcommit{t},\gtreset), v)$:
    this transition makes $v'$ (waiting in $\amark$) return to non-waiting in $\amark'$,
    and $v$ (non-idle in $\amark$) return to idle in $\amark'$.
    This case is exactly symmetrical to the firing of
    $(v',(\gttry{t},\gtrecv),v)$ and an analoguous reasoning shows
    that it preserves the property at hand.
  \item firing $(v', (\gtack,\gtreset), v)$:
    similarly this case is symmetrical to $(v', (\gtfwd,\gtrecv), v)$
    and accordingly can be handled by the same reasoning.
  \end{compactitem}
  Thus the property is preserved by all firings of transitions of $\behof{\asys'}$,
  and therefore holds in all reachable markings.
  \qed
\end{proof}

We define a relation $\sameap$ between markings
$\amark : \placeof{\behof{\asys}} \rightarrow \nat$
and $\amark' : \placeof{\behof{\asys'}} \rightarrow \nat$,
where $\amark \sameap \amark'$ if for every place $(q,v) \in \placeof{\behof{\asys}}$
such that $\amark(q,v) = 1$ we have one of the following properties:
\begin{enumerate}[(i)]
  \item $\amark'(q,v) = 1$, or
  \item $\exists t \in \post{q}.~ \exists v' \in \vertof{\asys'}.~ \mathsf{WaitPath}_{\amark'}(v,v') \wedge \amark'(\halfplace{t},v) = 1 \wedge \amark'(\gtactive_t,v') = 1$, or
  \item $\exists t \in \pre{q}.~ \exists v' \in \vertof{\asys'}.~ \mathsf{WaitPath}_{\amark'}(v,v') \wedge \amark'(\halfplace{t},v) = 1 \wedge \amark'(\gtreply_t,v') = 1$.
\end{enumerate}
where we define the predicate $\mathsf{WaitPath}_{\amark'}(v,v')$
as the existence of an $\expalpha$-labeled path $v,v_1,v_2,...,v_k,v'$ from $v$ to $v'$
such that $\amark'(\gtwait_t,v_i) = 1$ for every $1 \leq i \leq k$.
Furthermore we say that $\amark'$
is \emph{canonical} if only condition (i) is ever satisfied.

Intuitively $\amark'$ is canonical and $\amark \sameap \amark'$ when
$\amark'$ has tokens in the same places as $\amark$ does.
For non-canonical $\amark'$, we can tolerate transitions that are partially executed
and depending on whether the partial execution of $t$ is less or more than half
we count the token as being either in $\pre{t}$ or in $\post{t}$
(recall: an $\expalpha$-labeled path of $\asys'$ simulates a transition of $\asys$,
and the progress of this simulation is less than half if we have a path of $\gtwait_t$ until
an $\gtactive_t$, and more than half if we have a path of $\gtwait_t$ until a $\gtreply_t$).
\begin{fact}\label{fact:sameap-sameap}
  For any $\amark \sameap \amark'$, we have
  $\atomsof{\amark}{\varlab} = \atomsof{\amark'}{\varlab^\expand}$
\end{fact}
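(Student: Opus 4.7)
The plan is to reduce the claim to a per-variable equality: for each $x \in \vars$, the number of $\amark$-tokens in places labeled $x$ by $\syslab$ equals the number of $\amark'$-tokens in places labeled $x$ by $\syslab^\expand$. Since the truth value $\sem{\alpha}_\amark^{\varlab}$ of any arithmetic formula depends only on these per-variable totals, equality of all totals yields equality of the satisfied atom sets.

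I would prove this equality by constructing an explicit bijection between $x$-labeled tokens of $\amark$ and of $\amark'$, guided by the three cases in the definition of $\sameap$. First I would observe that $\syslab^\expand$ is undefined on $\halfplace{t}$, $\gtidle_t$, and $\gtwait_t$, so only the places $q \in \placeof{\ptypes}$, $\gtactive_t$, and $\gtreply_t$ contribute on the $\amark'$ side. For each token at $(q,v)$ in $\amark$ with $\syslab(q)=x$, $\sameap$ provides exactly one case: case~(i) maps it to the token at $(q,v)$ in $\amark'$; case~(ii) maps it to the token at $(\gtactive_t,v')$ for the witnessing $v'$; case~(iii) maps it to the token at $(\gtreply_t,v')$. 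Label agreement is immediate in case~(i) and follows in case~(ii) from $t \in \post{q}$ forcing $\pre{t}=\set{q}$ in a process-type net, whence $\syslab^\expand(\gtactive_t)=\syslab(\pre{t})=\syslab(q)=x$; case~(iii) is symmetric, using $\syslab(\post{t})=\syslab(q)$ when $t\in\pre{q}$. The three cases are mutually exclusive because each $\halve{\ptype}$ process carries exactly one token in $\amark'$.

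For surjectivity of this map, I need every $x$-labeled token of $\amark'$ to lie in the image. A token at $(q,v)$ with $q \in \placeof{\ptypes}$ forces $v$ into case~(i) of $\sameap$, since cases~(ii) and~(iii) would place the token of $v$ at some $\halfplace{t}$ rather than at $q$. A token at $(\gtactive_t,v')$ or $(\gtreply_t,v')$ is handled by tracing backward through the $\epsilon_t$ routing tree: $v'$ is non-idle, so by \factref{fact:sameap-reachchara} it has a unique waiting $\expalpha$-predecessor; if that predecessor is itself an $\epsilon_t$ vertex (waiting means $\gtwait_t$-marked, hence non-idle), the argument iterates. The forest structure of $\expalpha$-edges from \defref{def:expansion} ensures termination at a unique $\halve{\ptype}$ vertex $v$ with $\amark'(\halfplace{t},v)=1$, and $\sameap$ then forces $v$ into case~(ii) or~(iii); the $\gtactive_t$ vs.\ $\gtreply_t$ status of the chain endpoint selects the correct case.

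The main obstacle is justifying this surjectivity direction, namely that the backward traversal yields a uniquely determined $\halve{\ptype}$ ancestor whose case in $\sameap$ matches the endpoint type. Uniqueness at each hop is exactly \factref{fact:sameap-reachchara}; termination follows from acyclicity of the $\expalpha$-forest (\defref{def:expansion}); and the match of case to endpoint holds because (ii) and (iii) are disambiguated by whether the $\halve{\ptype}$-vertex's logical place $q^*$ in $\amark$ lies in $\pre{t}$ or $\post{t}$, which a standard process-type net keeps disjoint since transitions have no self-loops. Throughout, I would rely on $\amark'$ being reachable, as required by \factref{fact:sameap-reachchara}.
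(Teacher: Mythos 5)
Your proposal is correct and follows essentially the same route as the paper: both establish a token-level bijection induced by the three mutually exclusive cases of $\sameap$, using \factref{fact:sameap-reachchara} and the forest structure of $\expalpha$-edges for injectivity/surjectivity, and then conclude that the per-variable token counts under $\varlab$ and $\varlab^\expand$ coincide. Your version merely spells out the surjectivity direction and the label agreement via $\syslab^\expand(\gtactive_t)=\syslab(\pre{t})$ and $\syslab^\expand(\gtreply_t)=\syslab(\post{t})$ in more detail than the paper does.
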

\begin{proof}
  Due to \lemref{lemma:expansion} and \factref{fact:sameap-reachchara},
  we know that when in the definition of $\sameap$ above we talk about
  an $\expalpha$-labeled path $\mathsf{WaitPath}_{\amark'}(v,v')$
  in the form of $v,v_1,...,v_k,v'$,
  it necessarily means that all $v_1,...,v_k$ are vertices of type $\epsilon_t$,
  and that they are disjoint from any other instanciation of $\mathsf{WaitPath}_{\amark'}$
  for a different place.
  In particular this implies that the conditions (i), (ii), (iii)
  are mutually exclusive, and any token in a place
  $\gtactive_t$ or $\gtreply_t$ can be injectively mapped to some vertex
  $v$ with a token in $\halfplace{t}$ from which there is a $\mathsf{WaitPath}_{\amark'}$
  as described.

  This induces a bijection between $\set{v \in \vertof{\asys} \mid \amark(q,v) = 1}$,
  and $\set{v \in \vertof{\asys'} \mid \amark'(q,v) = 1}
  \uplus \set{v \in \vertof{\asys'} \mid \amark'(\gtactive_t,v) = 1 \text{ for } t \in \post{q}}
  \uplus \set{v \in \vertof{\asys'} \mid \amark'(\gtreply_t,v) = 1 \text{ for } t \in \pre{q}}$,
  implying that they have the same cardinal.
  It happens that $\sum_{q\in\varlab^{-1}(x)} \amark(q)$ is exactly the cardinal of the first
  set, and $\sum_{q\in(\varlab^\expand)^{-1}(x)} \amark'(q)$ is the cardinal of the second.
  We conclude that $\varlab$ and $\varlab^\expand$ have the same interpretation of $x$,
  and since this applies to every $x \in \vars$ we conclude that the same atoms
  are satisfied in $\amark$ and in $\amark'$.
  \qed
\end{proof}

\begin{fact}\label{fact:sameap-epsilon}
  If $\amark'_0 \fire{\elab} \amark'_1$ in $\behof{\asys'}$
  and $\elab \in \expalpha \cup \overleftarrow{\expalpha}$,
  then for all $\amark$ marking of $\behof{\asys}$
  we have $\amark \sameap \amark'_0$ if and only if $\amark \sameap \amark'_1$.
\end{fact}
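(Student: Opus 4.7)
The plan is to perform a case analysis on the label $\elab$, which by \lemref{lemma:expansion} must be one of four routing labels: $(\gttry{t},\gtrecv)$ or $(\gtfwd,\gtrecv)$ in $\expalpha$, or $(\gtreset,\gtcommit{t})$ or $(\gtreset,\gtack)$ in $\overleftarrow{\expalpha}$. In each case I identify the source vertex $v_1$, target vertex $v_2$, and the four places whose token counts change when firing $\elab$; then I verify that no clause (i), (ii) or (iii) of the $\sameap$ relation switches truth value between $\amark'_0$ and $\amark'_1$ for any place $(q,v)$ with $\amark(q,v)=1$.

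The guiding intuition is that $\sameap$ assigns each real vertex $v$ of type $\halve{\ptype}$ an \emph{effective} place: either $v$'s actual place under $\amark'$ when the token sits at some $q\in\placeof{\ptype}$ (clause (i)), or $\pre{t}$ when $v$ is at $\halfplace{t}$ and its $\epsilon_t$-chain ends in a $\gtactive_t$ marker (clause (ii)), or $\post{t}$ when the chain ends in $\gtreply_t$ (clause (iii)). Each of the four routing transitions is designed to preserve this effective place at every real vertex: firing $(\gttry{t},\gtrecv)$ converts clause (i) for place $\pre{t}$ at $v_1$ into clause (ii) for the same place, using the direct edge from $v_1$ to $v_2$ as the trivial wait path; firing $(\gtfwd,\gtrecv)$ extends an existing wait path by one router transitioning from $\gtactive_t$ to $\gtwait_t$; and the two $\overleftarrow{\expalpha}$-transitions reverse these moves, contracting a reply chain (case $(\gtreset,\gtack)$) or converting clause (iii) for place $\post{t}$ into clause (i) (case $(\gtreset,\gtcommit{t})$).

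The main technical obstacle will be the bookkeeping required to establish both directions of the equivalence: not only must the clause satisfied in $\amark'_0$ remain satisfied in $\amark'_1$ (possibly under a different clause number), but no other vertex's conditions may be spuriously disturbed. Here \factref{fact:sameap-reachchara} is crucial, since it guarantees that each non-idle $\epsilon_t$-vertex has a \emph{unique} waiting $\expalpha$-predecessor; this ensures that the chain structure witnessing clauses (ii) and (iii) is unambiguous both before and after the firing, and that the three clauses of $\sameap$ remain mutually exclusive. Once the case analysis is carried out, the ``if and only if'' drops out because each firing establishes a bijection between the witnesses of $\sameap$-conditions in $\amark'_0$ and those in $\amark'_1$, so both implications follow from the same local argument.
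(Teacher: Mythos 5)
Your proposal is correct and follows essentially the same route as the paper's proof: a case analysis over the four routing labels, tracking how each firing converts the witnessing clause of $\sameap$ (clause (i) to (ii) via the trivial wait path for $(\gttry{t},\gtrecv)$, extension of the wait path for $(\gtfwd,\gtrecv)$, and the symmetric contractions for the $\overleftarrow{\expalpha}$ labels), with \factref{fact:sameap-reachchara} and \lemref{lemma:expansion} supplying uniqueness of the wait chains and mutual exclusivity of the clauses. The paper's argument is exactly this, so no further comparison is needed.
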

\begin{proof}
  Once again we rely on \lemref{lemma:expansion} and \factref{fact:sameap-reachchara}.
  Consider an interaction that can be fired between some $v$ and $v'$,
  and assume that it results in the transition from $\amark'_0$ to $\amark'_1$.
  Since $\sameap$ is defined pointwise on each
  place of $\behof{\asys}$ that holds a token, and since $\amark$ is unchanged,
  it suffices to look at exactly the places of $\behof{\asys'}$ that are changed
  by $\amark'_0 \fire{\elab} \amark'_1$. In all configurations we will show
  that the equivalence is preserved simply by changing which of the conditions
  (i), (ii), (iii) hold on each place.
  In all that follows we will write $t$ the transition that we consider
  (fully determined by the type $\epsilon_t$ of $v'$),
  and $v_0 \in \vertof{\asys}$ the vertex that is currently being
  considered for the equivalence.
  The cases are as follows:
  \begin{compactitem}
  \item $(v, (\gttry{t},\gtrecv), v')$:
    $\amark$ is linked to $\amark'_0$ on $(\pre{t},v)$ by condition (i).
    Furthermore $v_0 = v$.
    We show that after firing $\elab = (\gttry{t},\gtrecv)$ the markings are now
    linked by condition (ii) on the path $v,v'$.

    In $\amark'_0$, the vertex $v$ which is of type $\halve{\ptype}$
    has a token in place $\pre{t}$.
    We know this because it is the only way that its transition $\gttry{t}$ is fireable,
    and this is why condition (i) is satisfied.
    Similarly there must be a token in $(\gtidle_t,v')$ so that $\gtrecv$ is fireable.

    After firing $\elab$ we now have $\amark'_1(\halfplace{t},v) = 1$ and
    $\amark'_1(\gtactive_t,v') = 1$. This satisfies the condition for (ii)
    with $\mathsf{WaitPath}_{\amark'_1}(v,v')$ instanciated by the trivial path $v,v'$.
    Every step of this reasoning is in fact an equivalence,
    so we also get the other direction of the statement.
  \item $(v, (\gtfwd,\gtrecv), v')$:
    $\amark$ and $\amark'_0$ are linked on $(\pre{t},v_0)$ by condition (ii),
    in the form of $\mathsf{WaitPath}_{\amark'_0}(v_0,v)$ instanciated by some
    $v_0,v_1,...,v_k,v$.
    We show that $\amark$ and $\amark'_1$ are still linked by condition (ii),
    but on $v'$ instead of $v$.

    The interaction $(\gtfwd,\gtrecv)$ belongs to $\expalpha$,
    so $v_0,v_1,...,v_k,v,v'$ is still an $\expalpha$-labeled path.
    In fact after we fire $\elab$, we get
    $\amark'_1(\gtwait_t,v) = 1$ which means that
    $v_0,v_1,...,v_k,v,v'$ instanciates $\mathsf{WaitPath}_{\amark'}(v_0,v')$.
    Lastly we have $\amark'_1(\gtactive_t,v') = 1$,
    thus satisfying condition (ii).
  \item $(v, (\gtcommit{t},\gtreset), v')$:
    $\amark$ is linked to $\amark'_0$ on $(\post{t},v_0)$ by condition (iii).
    With a reasoning analoguous to the first case, including the fact that $v_0 = v$,
    we can prove that $\amark$ and $\amark'_1$ are instead linked by condition (i).
  \item $(v, (\gtack,\gtreset), v')$:
    $\amark$ is linked to $\amark'_0$ on $(\post{t},v_0)$ by condition (iii).
    With a reasoning analoguous to the second case, we can prove that
    $\amark$ and $\amark'_1$ are still linked by condition (iii),
    but using a shorter $\mathsf{WaitPath}(v_0,v)$ instead of $\mathsf{WaitPath}(v_0,v')$.
  \end{compactitem}
  \qed
\end{proof}

\begin{fact}\label{fact:sameap-stepleft}
  For any markings $\amark_1,\amark_2 : \placeof{\behof{\asys}} \rightarrow \nat$,
  transition $\elab \in \transof{\behof{\asys}}$
  such that $\amark_1 \fire{\elab} \amark_2$ in $\behof{\asys}$,
  and any $\amark'_1 : \placeof{\behof{\asys'}} \rightarrow \nat$,
  if $\amark_1 \sameap \amark'_1$ and $\amark'_1$ is canonical
  then there exist
  a marking $\amark'_2 : \placeof{\behof{\asys'}} \rightarrow \nat$,
  an $\expalpha$-sequence $\vec{p}$,
  and an $\overleftarrow{\expalpha}$-sequence $\vec{q}$,
  such that $\amark'_1 \fire{\vec{p} \cdot \elab \cdot \vec{q}} \amark'_2$
  in $\behof{\asys'}$ and $\amark_2 \sameap \amark'_2$
  with $\amark'_2$ canonical.
\end{fact}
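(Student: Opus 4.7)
The plan is to do a case analysis on $\elab \in \transof{\behof{\asys}}$, distinguishing between internal transitions of the form $(t,v)$ with $t \in \inttransof{\procof{\asys}(v)}$ and edge transitions $e = (v \arrow{(t,t')} v') \in \edgeof{\asys}$ corresponding to synchronizations between two processes.

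In the internal case, the construction of $\expand$ preserves internal transitions (the definition of $\halve{\ptype}$ leaves $\inttransof{\ptype}$ unchanged), so the same transition $(t,v)$ occurs in $\transof{\behof{\asys'}}$. Since $\amark'_1$ is canonical, $\amark'_1(\pre{t},v) = \amark_1(\pre{t},v) = 1$, so $(t,v)$ is fireable. I take $\vec{p} = \vec{q} = \varepsilon$ and $\amark'_2$ obtained by firing $(t,v)$ in $\amark'_1$; canonicality is immediate because only places of $\placeof{\ptypes}$ are affected, and the matching $\amark_2 \sameap \amark'_2$ follows from the fact that firing $(t,v)$ in $\behof{\asys}$ and in $\behof{\asys'}$ produces identical updates on those places.

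The main work is the edge case. By \lemref{lemma:expansion} and the inductive construction of $\expand$, the edge $e = (v, (t,t'), v')$ in $\asys$ arises from a unique configuration in $\asys'$: there is an $\expalpha$-labeled path $v = v_0, v_1, \ldots, v_k$ in $\asys'$ (with $v_1, \ldots, v_k$ of type $\epsilon_t$), a symmetric $\expalpha$-labeled path $v' = v'_0, v'_1, \ldots, v'_l$ (with vertices of type $\epsilon_{t'}$), and an edge $(v_k, (t,t'), v'_l) \in \edgeof{\asys'}$. The plan is to build $\vec{p}$ by first firing the request sequence along $v_0, \ldots, v_k$, namely $(v_0, (\gttry{t},\gtrecv), v_1), (v_1, (\gtfwd,\gtrecv), v_2), \ldots, (v_{k-1}, (\gtfwd,\gtrecv), v_k)$, followed by the analogous request sequence along $v'_0, \ldots, v'_l$. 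Canonicality of $\amark'_1$ together with \factref{fact:sameap-reachchara} guarantees that all $\epsilon_t$-vertices along both paths are idle, so each of these transitions is enabled in order. After $\vec{p}$, the vertex $v_k$ holds a token in $\gtactive_t$ and $v'_l$ in $\gtactive_{t'}$, making the transition $(v_k, (t,t'), v'_l) \in \transof{\behof{\asys'}}$ (which is the $\asys'$-witness of $\elab$) fireable; this is the middle step of the firing sequence. Finally, I take $\vec{q}$ to be the reset sequence: $(v_k,(\gtreset,\gtack),v_{k-1}), \ldots, (v_1,(\gtreset,\gtcommit{t}),v_0)$, followed by the analogous reset along the $v'$-path. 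Each of these lies in $\overleftarrow{\expalpha}$ by the construction of $\expand$ and is fireable because the preceding firings leave tokens in $\gtreply$ and $\gtwait$ places exactly where needed.

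To verify the target properties, I will trace the tokens: after the full sequence $\vec{p} \cdot \elab \cdot \vec{q}$, every $\epsilon_t$ and $\epsilon_{t'}$ vertex along the two paths is back to $\gtidle$, while $v$ has moved from $(\pre{t},v)$ to $(\post{t},v)$ and $v'$ from $(\pre{t'},v')$ to $(\post{t'},v')$, matching exactly the update $\amark_1 \fire{\elab} \amark_2$ performs in $\behof{\asys}$. Hence $\amark'_2$ is canonical and $\amark_2 \sameap \amark'_2$ holds via condition (i) of the definition of $\sameap$. The main obstacle is purely combinatorial bookkeeping: carefully checking enabledness of each intermediate transition along the request and reset phases, and ensuring that the interleaving across the two branches (from $v$ and from $v'$) does not require firing the synchronization transition before both representatives are in $\gtactive$. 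Sequencing the $v$-branch entirely before the $v'$-branch in $\vec{p}$ (and symmetrically in $\vec{q}$) avoids any such scheduling issue.
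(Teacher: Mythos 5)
Your proposal is correct and follows essentially the same route as the paper's proof: the internal case is dispatched with empty $\vec{p},\vec{q}$ using canonicity, and the synchronization case fires the try/forward request chain along the $\epsilon_t$-path from $v$, then along the $\epsilon_{t'}$-path from $v'$, then the joint $(t,t')$ at the representatives, then the reset/ack chains in reverse, restoring all routers to $\gtidle$ so that $\amark'_2$ is again canonical. Your explicit remark that sequencing one branch entirely before the other avoids any interleaving concern is a small clarification the paper leaves implicit, but the argument is the same.
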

\begin{proof}
  Firstly if $\elab$ is an internal transition,
  then the canonicity of $\amark_1 \sameap \amark_2$ and the fact that
  process types in $\halve{\ptypes}$ have the same internal transitions as
  those of $\ptypes$, give that $\elab$ must also be fireable in $\amark_2$.
  We thus choose both $\vec{p}$ and $\vec{q}$ to be the empty sequence.

  Otherwise, we assume that from $\amark_1$ to $\amark_2$,
  the interaction $\elab = (t,t')$ occurs between vertices $v$ and $v'$.
  As we hinted at in \exref{ex:halves}, we choose $\vec{p}$ to be
  $(\gttry{t},\gtrecv),(\gtfwd,\gtrecv),...,(\gtfwd,\gtrecv)$
  on an $\epsilon_t$-path starting from $v$,
  followed by
  $(\gttry{t'},\gtrecv),(\gtfwd,\gtrecv),...,(\gtfwd,\gtrecv)$,
  on an $\epsilon_{t'}$-path starting from $v'$.
  After that we execute $(t,t')$, then propagate backwards for $\vec{q}$
  as $(\gtreset,\gtack),...,(\gtreset,\gtack),(\gtreset,\gtcommit{t})$
  and $(\gtreset,\gtack),...,(\gtreset,\gtack),(\gtreset,\gtcommit{t'})$
  along the same two paths in reverse.

  Since $\amark'_1$ is canonical,
  all vertices of type $\epsilon_t$ and $\epsilon_{t'}$ in $\asys'$
  have a token in $\gtidle_t$ in $\amark'_1$.
  Since $\asys = \expof{\asys'}$, the existence of an interaction
  $(t,t')$ between $v$ and $v'$ implies that there are some
  $u,u'$ in $\asys'$ and there exist an $\epsilon_t$-path from $v$ to $u$
  and an $\epsilon_{t'}$-path from $v'$ to $u'$.
  Along such paths, the sequence of transitions described above is a firing sequence
  if initially all vertices of type $\epsilon_t$ and $\epsilon_{t'}$ are in state $\gtidle_t$
  and $\gtidle_{t'}$ respectively,
  and they will return to state $\gtidle_t$ or $\gtidle_{t'}$ once the entire
  sequence is fired.

  Thus $\amark'_1 \fire{\vec{p} \cdot \elab \cdot \vec{q}} \amark'_2$,
  $\amark'_2$ is canonical, and $\amark_2 \sameap \amark'_2$.
  \qed
\end{proof}

\begin{fact}\label{fact:sameap-seqleft}
  Every (finite or infinite) firing sequence in $\pathsof{\behof{\asys}}$
  admits a stuttering firing sequence in $\pathsof{\behof{\asys'}}$.
\end{fact}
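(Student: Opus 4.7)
The plan is to proceed by induction on the length of $\rho$, using \factref{fact:sameap-stepleft} to simulate each transition by a compound segment in $\behof{\asys'}$ while maintaining the invariant that the current $\behof{\asys'}$-marking is canonical and $\sameap$-related to the corresponding $\behof{\asys}$-marking. The base case $\initmarkof{\behof{\asys}} \sameap \initmarkof{\behof{\asys'}}$ with the latter canonical follows by inspection: both initial markings place a single token on the initial place of each process and no token on any place of the form $\halfplace{t}$ or $\gtactive_t,\gtwait_t,\gtreply_t$, so clause (i) of the definition of $\sameap$ applies uniformly on every vertex.

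For a finite $\rho = \elab_1 \cdots \elab_n$ producing $\amark_0 \fire{\elab_1} \amark_1 \cdots \fire{\elab_n} \amark_n$, I construct $\rho'$ iteratively as the concatenation $\vec{p}_1 \cdot \elab_1 \cdot \vec{q}_1 \cdots \vec{p}_n \cdot \elab_n \cdot \vec{q}_n$. At step $i$, the inductive hypothesis provides a canonical $\amark'_{i-1}$ with $\amark_{i-1} \sameap \amark'_{i-1}$, and \factref{fact:sameap-stepleft} supplies $\vec{p}_i, \vec{q}_i$ yielding $\amark'_{i-1} \fire{\vec{p}_i \cdot \elab_i \cdot \vec{q}_i} \amark'_i$ with $\amark'_i$ canonical and $\amark_i \sameap \amark'_i$, preserving the invariant. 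For an infinite $\rho$, the construction simply continues indefinitely, producing an infinite $\rho' \in \pathsof{\behof{\asys'}}$.

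To verify $\rho \stutter_\syslab \rho'$, I appeal to \factref{fact:sameap-epsilon}: because $\vec{p}_i$ is an $\expalpha$-sequence and $\vec{q}_i$ an $\overleftarrow{\expalpha}$-sequence, every intermediate marking along $\vec{p}_i$ remains $\sameap$-related to $\amark_{i-1}$, and the marking reached right after firing $\elab_i$ as well as every intermediate marking along $\vec{q}_i$ remains $\sameap$-related to $\amark_i$. Writing $A_i \isdef \atomsof{\amark_i}{\syslab\circ\proj{}{1}}$ and applying \factref{fact:sameap-sameap}, each such marking of $\rho'$ satisfies exactly the atoms $A_{i-1}$ or $A_i$. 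Consequently the atom sequence along $\rho'$ partitions into contiguous blocks matching $\atomsof{\rho}{\syslab\circ\proj{}{1}}$, with $A_0$ occurring $1+|\vec{p}_1|$ times, each $A_i$ for $0 < i < n$ occurring $1+|\vec{q}_i|+|\vec{p}_{i+1}|$ times, and $A_n$ occurring $1+|\vec{q}_n|$ times, each count being at least one as required by \defref{def:trace-equivalence}.

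The main subtlety that I anticipate is the bookkeeping of these stuttering block lengths: one must argue that the tail segment $\vec{q}_i$ of the simulation of $\elab_i$ and the head segment $\vec{p}_{i+1}$ of the simulation of $\elab_{i+1}$ both contribute to the same block indexed by $A_i$, so that the atom sequence of $\rho'$ partitions cleanly into the consecutive blocks demanded by \defref{def:trace-equivalence}. Everything else follows mechanically from the invariants supplied by \factref{fact:sameap-stepleft} and \factref{fact:sameap-epsilon}, with no further combinatorial obstacle; in particular the infinite case passes to the limit without incident, since each finite prefix of $\rho$ yields a matching finite prefix of $\rho'$ and the stuttering decomposition is defined entry by entry.
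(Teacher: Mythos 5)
Your proposal is correct and follows essentially the same route as the paper's proof: initialize with the canonical relation $\initmarkof{\behof{\asys}} \sameap \initmarkof{\behof{\asys'}}$, iterate \factref{fact:sameap-stepleft} to build $\rho'$ as the concatenation of the segments $\vec{p}_i \cdot \elab_i \cdot \vec{q}_i$, and invoke \factref{fact:sameap-epsilon} together with \factref{fact:sameap-sameap} to obtain the stuttering block decomposition with the same multiplicities $1+|\vec{q}_i|+|\vec{p}_{i+1}|$. No gaps.
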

\begin{proof}
  Let $\initmarkof{\behof{\asys}} = \amark_0 \fire{\elab_0} \amark_1 \fire{\elab_1} \amark_2 \fire{\elab_2} \cdots$
  be a (finite or infinite) firing sequence $\rho \in \pathsof{\behof{\asys}}$.
  Using the initialization that
  $\initmarkof{\behof{\asys}} \sameap \initmarkof{\behof{\asys'}}$,
  and the latter is canonical (since the initial token of $\epsilon_t$ is in $\gtidle_t$
  for all $t$),
  we can inductively apply \factref{fact:sameap-stepleft} at every step to construct
  a firing sequence
  $\initmarkof{\behof{\asys'}} =
    \amark'_0 \fire{\vec{p}_0\cdot\elab_0\vec{q}_0}
    \amark'_1 \fire{\vec{p}_1\cdot\elab_1\cdot\vec{q}_1}
    \amark'_2 \fire{\vec{p}_2\cdot\elab_2\cdot\vec{q}_2} \cdots$
  which is a path $\rho' \in \pathsof{\behof{\asys'}}$.
  In this construction all $\vec{p}_i$ and $\vec{q}_i$ are
  $\expalpha$-sequences and $\overleftarrow{\expalpha}$-sequences respectively,
  thus invariant for $\sameap$ (\factref{fact:sameap-epsilon}).
  This in turn implies that they do not change the set of atomic propositions
  that hold (\factref{fact:sameap-sameap}).

  A path $\atomsof{\rho}{\varlab} = A_0 A_1 A_2 \cdots$, where $\rho \in \pathsof{\behof{\asys}}$
  is thus transformed by this construction into a path
  $\atomsof{\rho'}{\varlab^\expand} = A_0^{1+|\vec{p}_0|} A_1^{1+|\vec{q}_0|+|\vec{p}_1|} A_2^{1+|\vec{q}_1|+|\vec{p}_2|} \cdots$
  which gives $\rho \stutter_\syslab \rho'$, for some $\rho' \in \pathsof{\behof{\asys'}}$.
  \qed
\end{proof}

\begin{fact}\label{fact:sameap-epsfinite}
  Any firing sequence of $\expalpha\cup\overleftarrow{\expalpha}$-transitions
  must be finite.
\end{fact}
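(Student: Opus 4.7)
The plan is to show that the set of fireable transitions is very restricted when we disallow the ``real work'' transitions $(t,t')$ and the internal transitions, and that those that remain strictly make progress along a bounded quantity.

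First, I would observe that every transition in $\overleftarrow{\expalpha}$, namely $(\gtcommit{t},\gtreset)$ and $(\gtack,\gtreset)$, has $\gtreset$ as one of its components and therefore requires some vertex of type $\epsilon_t$ to carry a token in place $\gtreply_t$ in order to be enabled (since $\pre{\gtreset} = \gtreply_t$). So before any $\overleftarrow{\expalpha}$-transition can ever fire, the marking must already contain a token in some $\gtreply_t$-place.

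Next, I would argue by induction along the firing sequence that this never happens. In the initial marking of $\behof{\asys'}$, each $\epsilon_t$ vertex has its unique token in $\gtidle_t$, so no $\gtreply_t$ place is marked. Inspecting $\epsilon_t$ (Figure~\ref{fig:halfprocs}(c)), the only transitions whose postset contains $\gtreply_t$ are $t$ itself and $\gtack$. Firing an instance of $t$ requires a network edge labelled $(t,t')$, which belongs to neither $\expalpha$ nor $\overleftarrow{\expalpha}$ (by Lemma~\ref{lemma:expansion}, such edges exist only between two $\epsilon_\cdot$-vertices and are labelled with a single pair $(t,t')$). Firing $\gtack$ only occurs as part of a $(\gtack,\gtreset)$-transition, which is itself in $\overleftarrow{\expalpha}$ and, by the previous step, requires a pre-existing $\gtreply_t$-token. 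Hence, in a firing sequence using only $\expalpha\cup\overleftarrow{\expalpha}$-transitions, no $\gtreply_t$-place ever becomes marked, and consequently no $\overleftarrow{\expalpha}$-transition ever fires.

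It remains to bound firing sequences consisting only of $\expalpha$-transitions, i.e.\ $(\gttry{t},\gtrecv)$ and $(\gtfwd,\gtrecv)$. For this I would introduce a simple potential function $\phi$ on places of $\ptypes^\expand$, setting $\phi(\halfplace{t}) \isdef 1$, $\phi(\gtactive_t) \isdef 1$, $\phi(\gtwait_t) \isdef 2$, and $\phi(q) \isdef 0$ on all other places, and lift it to markings by $\Phi(\amark) \isdef \sum_{q,v} \phi(q)\cdot\amark(q,v)$. A direct check on the two relevant transitions shows that each firing strictly increases $\Phi$: in $(\gttry{t},\gtrecv)$ the $\halve{\ptype}$-token moves from a weight-0 place of $\placeof{\ptype}$ to $\halfplace{t}$ (gain $+1$) while the $\epsilon_t$-token moves from $\gtidle_t$ to $\gtactive_t$ (gain $+1$); in $(\gtfwd,\gtrecv)$ one $\epsilon_t$-token moves $\gtactive_t\to\gtwait_t$ (gain $+1$) and the other moves $\gtidle_t\to\gtactive_t$ (gain $+1$), for a total gain of $+2$ in both cases. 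Since each vertex carries exactly one token and $\phi \le 2$, we have $\Phi(\amark) \le 2\cdot|\vertof{\asys'}|$ uniformly, so no firing sequence of $\expalpha$-transitions can exceed $|\vertof{\asys'}|$ steps, which concludes. The only slightly delicate step is the first one, where one must carefully invoke Lemma~\ref{lemma:expansion} to ensure that labels $(t,t')$ truly lie outside $\expalpha\cup\overleftarrow{\expalpha}$; everything else is a direct inspection of Figure~\ref{fig:halfprocs}.
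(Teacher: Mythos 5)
There is a genuine gap, and it sits in the very first step of your argument. The fact must hold for firing sequences of $\expalpha\cup\overleftarrow{\expalpha}$-transitions starting from an \emph{arbitrary reachable marking}, not just from $\initmarkof{\behof{\asys'}}$: it is invoked in \factref{fact:sameap-seqright} to bound the length of each stuttering segment lying between two consecutive non-$\expalpha\cup\overleftarrow{\expalpha}$ transitions, and those segments begin at markings reached after a $(t,t')$-interaction has just fired. At such a marking the two routers involved hold tokens in $\gtreply_t$ and $\gtreply_{t'}$, and the entire acknowledgement phase --- the sequence $(\gtreset,\gtack),\ldots,(\gtreset,\gtack),(\gtreset,\gtcommit{t})$ --- consists precisely of $\overleftarrow{\expalpha}$-transitions. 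Your inductive claim that no $\gtreply_t$-place ever becomes marked, and hence that no $\overleftarrow{\expalpha}$-transition ever fires, is therefore false in exactly the situations where the fact is needed; it only holds for the degenerate case of a sequence launched from the initial marking, where the conclusion is much weaker than what \factref{fact:sameap-seqright} requires.

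Your second half (the potential $\Phi$ that strictly increases under $(\gttry{t},\gtrecv)$ and $(\gtfwd,\gtrecv)$ and is bounded by $2\cdot\cardof{\vertof{\asys'}}$) is correct for pure-$\expalpha$ sequences, but it cannot be patched into a proof of the full statement by itself: the $\overleftarrow{\expalpha}$-transitions \emph{decrease} $\Phi$, so a single constant-weight potential is monotone in neither direction over mixed sequences, and interleavings of forward and backward steps on different parts of the graph defeat any simple alternation-counting argument. The paper avoids this by defining a cost function $c$ on places by recursion along the $\expalpha$-edge structure (well-founded because that subgraph is a forest), so that distances to the root are built into the weights; the resulting $\mathrm{fuel}(\amark)=\sum_q \amark(q)\,c(q)$ then strictly decreases under \emph{every} $\expalpha\cup\overleftarrow{\expalpha}$-transition, from any marking, which is what termination requires. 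To repair your proof you would need a measure of this graph-dependent kind rather than a uniform per-place weight.
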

\begin{proof}
  In principle, the proof is straightforward: we define a cost function on places
  and show that every $\expalpha$-labeled transition moves a token to a place that
  has a strictly smaller cost.
  Let $c : \placeof{\behof{\asys'}} \rightarrow \nat$ be defined as follows:
  \begin{align*}
    c(\gtwait_t,v) & = c(\gtidle_t,v) = c(\halfplace{t},v) = 0 & \text{for any } t,v \\
    c(\gtactive_t,v) & = 1 + c(\gtactive_t,v') & \text{if } \exists v'.~ v \xrightarrow{\expalpha} v' \\
    c(\gtactive_t,v) & = 0 & \text{otherwise} \\
    c(q,v) & = 1 + \max_{t\in\post{q}} \max_{v' : v \xrightarrow{\expalpha} v'} c(\gtactive_t,v') & \text{for any } q,v \\
    c(\gtreply_t,v) & = 1 + \max_{v' : v' \xrightarrow{\expalpha} v} c(\post{t}, v) + \max_{v': v' \xrightarrow{\expalpha} v} c(\gtreply_t,v') & \text{for any } t,v \\
  \end{align*}
  This function, apart from being obviously nonnegative, is well-defined everywhere:
  the acyclicity of the subgraph of $\asys'$ composed only of $\expalpha$-edges
  guarantees that the system of equations above has a solution.
  More precisely:
  \begin{compactitem}
  \item $c$ is obviously well-defined for places of the form $(\gtwait_t,v)$,
    $(\gtidle_t,v)$, or $(\halfplace{t},v)$;
  \item when a token is in a place $(\gtactive_t,v)$ it can only be involved in
    transitions in the direction of $\expalpha$ which is acyclic.
    Furthermore vertices $\epsilon_t$ form a forest so if $v'$ such that
    $v \xrightarrow{\expalpha} v'$ exists then it is unique,
    thus $c$ is also well-defined for places of the form $(\gtactive_t,v)$;
  \item finally $c$ over $(q,v)$ and $(\gtreply_t,v)$ is defined in terms
    of $\expalpha$-predecessors, which are not unique (thus why we need a $\max$),
    but still form an acyclic graph.
  \end{compactitem}
  Given a marking $\amark$, we define
  $$ \mathrm{fuel}(\amark) \isdef \sum_{q\in\placeof{\asys'}} \amark(q) c(q) $$
  Observe that since $\amark(q)$ can only take the value $0$ or $1$,
  the $\mathrm{fuel}$ of a marking is the sum of the cost of overy place with a token.
  Our objective is now to show that whenever
  $\amark \fire{t_0} \amark'$ for $t_0 \in \transof{\behof{\asys'}}$
  labeled by $\elab \in \expalpha\cup\overleftarrow{\expalpha}$,
  we have $\mathrm{fuel}(\amark) > \mathrm{fuel}(\amark')$.
  We do this by showing the equivalent fact that
  $\mathrm{fuel}(\amark) - \mathrm{fuel}(\amark') > 0$
  because most of it cancels out:
  partition the set of vertices into $\pre{t_0}$ (loses a token when we fire $t_0$),
  $\post{t_0}$ (receives a token when we fire $t_0$),
  and the rest (unchanged), and we are left with simply
  $$ \mathrm{fuel}(\amark) - \mathrm{fuel}(\amark')
    = \sum_{q \in \pre{t}} c(q) - \sum_{q \in \post{t}} c(q) $$
  Now for the case analysis on $\elab$:
  \begin{compactitem}
  \item $(v, (\gttry{t},\gtrecv), v')$:
    $\mathrm{fuel}(\amark) - \mathrm{fuel}(\amark')
    = c(\pre{t},v) + c(\gtidle_t,v') - c(\halfplace{t},v) - c(\gtactive_t,v')
    = c(\pre{t},v) - c(\gtactive_t,v')$.
    Since $v \xrightarrow{\expalpha} v'$, $(\gtactive_t,v')$ is already
    accounted for in $c(\pre{t}, v) = 1 + \max_{t\in \post{(\pre{t})}}
    \max_{v':v\xrightarrow{\expalpha} v'} c(\gtactive_t,v')$,
    and thus $c(\pre{t},v) > c(\gtactive_t,v')$.
    We easily conclude that $\mathrm{fuel}(\amark) - \mathrm{fuel}(\amark') > 0$.
  \item $(v, (\gtfwd,\gtrecv), v')$:
    $\mathrm{fuel}(\amark) - \mathrm{fuel}(\amark')
    = c(\gtactive_t,v) + c(\gtidle_t,v') - c(\gtidle_t,v) - c(\gtactive_t,v')
    = c(\gtactive_t,v) - c(\gtactive_t,v')
    = (1 + c(\gtactive_t,v')) - c(\gtactive_t,v')
    = 1$ strictly positive as expected.
  \item $(v, (\gtcommit{t},\gtreset), v')$:
    $\mathrm{fuel}(\amark) - \mathrm{fuel}(\amark')
    = c(\halfplace{t},v) + c(\gtreply_t,v') - c(\post{t},v) - c(\gtidle_t,v')
    = c(\gtreply_t,v') - c(\post{t},v)$.
    Since the quantity $c(\gtreply_t,v)$ occurs in
    $\max_{v:v\xrightarrow{\expalpha}v'} c(\gtreply_t,v)$
    which is part of the sum that constructs $c(\gtreply_t,v')$,
    the inequality is obvious.
  \item $(v, (\gtack,\gtreset), v')$:
    The same reasoning as the previous case applies, since this time
    $\mathrm{fuel}(\amark) - \mathrm{fuel}(\amark')
    = c(\gtreply_t,v) - c(\gtreply_t,v')$ and once again
    $c(\gtreply_t,v')$ occurs in
    $\max_{v:v\xrightarrow{\expalpha}v'} c(\gtreply_t,v)$
    leading to $c(\gtreply_t,v) - c(\gtreply_t,v') > 0$.
  \end{compactitem}
  In all cases we obtain $\mathrm{fuel}(\amark) > \mathrm{fuel}(\amark')$.
  The function $\mathrm{fuel}$, as we have already justified,
  is a well-defined finite nonnegative quantity.
  It follows that an infinite firing sequence of
  $\expalpha\cup\overleftarrow{\expalpha}$-labeled edges cannot exist.
  \qed
\end{proof}

\begin{fact}\label{fact:sameap-stepright}
  For any markings $\amark'_1,\amark'_2 : \placeof{\behof{\asys'}} \rightarrow \nat$,
  transition $\elab \in \transof{\behof{\asys'}}$
  such that $\amark'_1 \fire{\elab} \amark'_2$ in $\behof{\asys}$,
  and any $\amark_1 : \placeof{\behof{\asys}} \rightarrow \nat$,
  if $\amark_1 \sameap \amark'_1$ and $\delta\not\in\expalpha\cup\overleftarrow{\expalpha}$
  then there exists a marking $\amark_2 : \placeof{\behof{\asys}} \rightarrow \nat$
  such that $\amark_1 \fire{\elab} \amark_2$ in $\behof{\asys}$
  and $\amark_2 \sameap \amark'_2$.
\end{fact}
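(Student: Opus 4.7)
The plan is to carry out a case analysis on $\elab$. By Lemma~\ref{lemma:expansion} combined with the hypothesis $\elab \notin \expalpha \cup \overleftarrow{\expalpha}$, the transition $\elab$ is either (a)~an internal transition at a $\halve{\ptype}$-vertex $v$ (syntactically the same as an internal transition of $\ptype$), or (b)~an interaction labelled $(t,t')$ on an edge between an $\epsilon_t$-vertex $v'$ and an $\epsilon_{t'}$-vertex $v''$.

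For case~(a), I first observe that $\amark'_1(\pre{\elab}, v) = 1$ with $\pre{\elab} \in \placeof{\ptype}$; since each $\halve{\ptype}$-vertex carries exactly one token, the unique token of $v$ in $\amark_1$ cannot be justified by clauses (ii) or (iii) of $\sameap$, which force that token to sit in some $\halfplace{t}$. Clause~(i) is therefore forced, giving $\amark_1(\pre{\elab}, v) = 1$. Thus $\elab$ fires in $\behof{\asys}$, producing $\amark_2$ with $\amark_2(\post{\elab}, v)=1$, and clause~(i) is re-established against $\amark'_2$ while all other places are left unchanged.

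For case~(b), I would first recover the two ``owners'' of the interaction. Starting from $\amark'_1(\gtactive_t, v')=1$, Fact~\ref{fact:sameap-reachchara} produces a unique $\expalpha$-predecessor of $v'$ that is waiting on $t$; iterating Fact~\ref{fact:sameap-reachchara} along the chain of $\gtwait_t$-vertices (the forest condition on $\expalpha$-edges guarantees that this chain is unambiguous) terminates at the unique $\halve{\ptype}$-vertex $u$ satisfying $\amark'_1(\halfplace{t}, u)=1$ and witnessing $\mathsf{WaitPath}_{\amark'_1}(u, v')$; analogously one obtains $u'$ with $\mathsf{WaitPath}_{\amark'_1}(u', v'')$. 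Clause~(ii) of $\amark_1 \sameap \amark'_1$ applied at $u$ and $u'$ then forces $\amark_1(\pre{t}, u) = \amark_1(\pre{t'}, u') = 1$. Since $\asys = \expof{\asys'}$ by Lemma~\ref{lemma:expansion} and the two $\expalpha$-paths meet at the edge $(v',(t,t'),v'')$ of $\asys'$, the definition of expansion delivers an edge $(u,(t,t'),u') \in \edgeof{\asys}$, and the corresponding $(t,t')$-transition of $\behof{\asys}$ is enabled at $\amark_1$; firing it gives the candidate $\amark_2$.

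The hard part will be verifying $\amark_2 \sameap \amark'_2$, which splits in two. First, for the freshly created tokens at $(\post{t}, u)$ and $(\post{t'}, u')$, clause~(iii) must be re-established: the intermediate $\gtwait_t$-tokens along $\mathsf{WaitPath}(u,v')$ are untouched by $\elab$, so the path survives in $\amark'_2$, and $\amark'_2(\gtreply_t, v')=1$ together with $\amark'_2(\halfplace{t}, u)=1$ fulfils clause~(iii); the case of $u'$ is symmetric. Second, one must argue that no token of $\amark_2$ that already existed in $\amark_1$ loses its $\sameap$-witness when we pass from $\amark'_1$ to $\amark'_2$: the only places where $\amark'_1$ and $\amark'_2$ differ are $\gtactive_t$ at $v'$, $\gtactive_{t'}$ at $v''$, $\gtreply_t$ at $v'$, and $\gtreply_{t'}$ at $v''$, and the backward uniqueness of the $\mathsf{WaitPath}$ ending at $v'$ (respectively at $v''$), which follows from the forest property together with Fact~\ref{fact:sameap-reachchara}, ensures that these four places served as witnesses only for the tokens at $u$ and $u'$, both of which have just been consumed. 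This establishes $\amark_2 \sameap \amark'_2$ and closes the proof.
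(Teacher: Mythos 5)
Your proposal is correct and follows essentially the same route as the paper's proof: a case split on internal versus observable transitions, using Fact~\ref{fact:sameap-reachchara} to trace the $\gtwait$-chain back to the owning $\halve{\ptype}$-vertices, clause~(ii) of $\sameap$ to obtain enabledness in $\behof{\asys}$, and clause~(iii) to re-establish the relation after firing. You are in fact somewhat more careful than the paper in two spots it glosses over — justifying that the edge $(u,(t,t'),u')$ exists in $\asys$ via the expansion, and checking that the four modified places were not witnesses for any other token.
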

\begin{proof}
  If $\elab = t$ is an internal transition,
  it means that for some $(q,v) = \pre{t}$ we have $\amark'_1(q,v) = 1$.
  Thee definition of $\sameap$ this implies $\amark_1(q,v) = 1$ and thus $\elab$
  is also fireable in $\behof{\asys}$ and obviously leads to $\amark_2 \sameap \amark'_2$.

  Otherwise $\elab = (t,t')$ is an observable transition.
  It must occur between some $e$ and $e'$ of respective types $\epsilon_t$
  and $\epsilon_{t'}$, and in $\amark_2$ they must have a token in
  $\gtactive_t$ and $\gtactive_{t'}$ respectively.
  By the characterization from \factref{fact:sameap-reachchara},
  there must furthermore exist some vertices $v, v'$ such that
  there is an $\epsilon_t$-path from $v$ to $e$
  and an $\epsilon_{t'}$-path from $v'$ to $e'$,
  and all vertices along those paths have a token
  in $\gtwait_t$ or $\gtwait_{t'}$ respectively.
  Finally, still from \factref{fact:sameap-reachchara},
  we obtain that in $\amark'_1$, $v$ and $v'$ have a token in
  $\halfplace{t}$ and $\halfplace{t'}$ respectively.

  Knowing that $\amark_1 \sameap \amark'_1$,
  this means that $\amark_1(\pre{t},v) = 1$ and $\amark_1(\pre{t'},v) = 1$,
  and thus $(t,t')$ is fireable in $\amark'_1$.
  Firing it yields $\amark'_2$ with
  $\amark_2(\post{t},v) = 1$ and $\amark_2(\post{t'},v) = 1$.
  On the other hand $\amark'_2$ only differs from $\amark'_1$ in that
  instead of having
  $\amark'_1(\gtactive_t,e) = 1$ and $\amark'_1(\gtactive_{t'},e') = 1$
  we now have
  $\amark'_2(\gtreply_t,e) = 1$ and $\amark'_2(\gtreply_{t'},e') = 1$.
  In the definition of $\sameap$, we thus easily check that
  $\amark_2 \sameap \amark'_2$.
  \qed
\end{proof}

\begin{fact}\label{fact:sameap-seqright}
  Every (finite or infinite) firing sequence in $\pathsof{\behof{\asys'}}$
  is a stuttering of a firing sequence in $\pathsof{\behof{\asys}}$.
\end{fact}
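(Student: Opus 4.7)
The plan is to decompose an arbitrary firing sequence $\rho' \in \pathsof{\behof{\asys'}}$ according to the distinction between ``visible'' transitions (labels outside $\expalpha \cup \overleftarrow{\expalpha}$) and ``invisible'' ones (labels inside $\expalpha \cup \overleftarrow{\expalpha}$), then obtain $\rho \in \pathsof{\behof{\asys}}$ by keeping only the visible steps, each matched in $\behof{\asys}$ via \factref{fact:sameap-stepright}.

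First I would check that $\initmarkof{\behof{\asys}} \sameap \initmarkof{\behof{\asys'}}$: every $\epsilon_t$-vertex in $\asys'$ starts in $\gtidle_t$ and every $\halve{\ptype}$-vertex has the same initial token distribution as its counterpart in $\asys$, so condition (i) of $\sameap$ is satisfied at every marked place. Then I proceed by induction along $\rho' = \amark'_0 \fire{\elab_0} \amark'_1 \fire{\elab_1} \cdots$, maintaining the invariant $\amark_i \sameap \amark'_i$ for some marking $\amark_i$ of $\behof{\asys}$. For each step: if $\elab_i \in \expalpha \cup \overleftarrow{\expalpha}$, then \factref{fact:sameap-epsilon} gives $\amark_i \sameap \amark'_{i+1}$ and I leave $\amark_i$ unchanged, adding no transition to $\rho$; otherwise, \factref{fact:sameap-stepright} supplies a firing $\amark_i \fire{\elab_i} \amark_{i+1}$ in $\behof{\asys}$ with $\amark_{i+1} \sameap \amark'_{i+1}$, and I append $\elab_i$ to $\rho$.

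To establish $\rho \stutter_\syslab \rho'$ per \defref{def:trace-equivalence}, I rely on \factref{fact:sameap-sameap}: throughout each maximal block of consecutive invisible transitions the set of satisfied atoms is constant, so regrouping $\atomsof{\rho'}{\syslab^\expand\circ\proj{}{1}}$ by these blocks recovers exactly $\atomsof{\rho}{\syslab\circ\proj{}{1}}$ with each atom set repeated the appropriate number of times. Every block size is at least $1$, as it always contains at least the marking $\amark'_i$ that precedes the next visible transition (or the initial marking for the first block).

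The main obstacle is the infinite case: a priori $\rho$ could be finite while $\rho'$ is infinite, if $\rho'$ ends with an infinite tail of invisible transitions. This is precisely what \factref{fact:sameap-epsfinite} rules out: any firing sequence composed purely of $\expalpha \cup \overleftarrow{\expalpha}$-transitions is finite. Hence every block in the decomposition has finite length $n_i \geq 1$, and an infinite $\rho'$ necessarily contains infinitely many visible transitions, making the constructed $\rho$ itself infinite, as required by the stuttering definition.
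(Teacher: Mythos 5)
Your proof is correct and takes essentially the same route as the paper's: both decompose $\rho'$ into visible transitions matched via \factref{fact:sameap-stepright} and invisible $\expalpha\cup\overleftarrow{\expalpha}$-blocks handled by \factref{fact:sameap-epsilon}, with \factref{fact:sameap-epsfinite} ruling out an infinite invisible tail and \factref{fact:sameap-sameap} converting $\sameap$ into equality of atom sets. Your treatment of the initial marking and of the infinite case is if anything slightly more explicit than the paper's.
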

\begin{proof}
  Let $\initmarkof{\behof{\asys'}} = \amark'_0 \fire{t_1} \amark'_1 \fire{t_2} \cdots$
  be a firing sequence $\rho' \in \pathsof{\behof{\asys'}}$.
  Let $i_0 = 0$ then $i_1,i_2,...$ be an enumeration of all the indices $i$ for which
  $t_i \not\in \expalpha\cup\overleftarrow{\expalpha}$.
  Starting from $\initmarkof{\behof{\asys'}} = \amark_0$,
  we apply \factref{fact:sameap-stepright} for every $i_j$ in increasing order,
  to get a firing sequence $\rho$ defined by
  $\amark_0 \fire{t_{i_1}} \amark_1 \fire{t_{i_2}} \cdots$
  and such that $\amark_j \sameap \amark'_{i_j}$ for every $j \geq 0$.
  Applying \factref{fact:sameap-epsilon} in-between gives that
  $\amark_j \sameap \amark'_k$ for every $i_j \leq k < i_{j+1}$,
  and \factref{fact:sameap-epsfinite} guarantees that there are finitely
  many stuttering steps.
  We can finally conclude
  $\atomsof{\rho}{\varlab} \stutter \atomsof{\rho'}{\varlab^\expand}$.
  \qed
\end{proof}

Finally we conclude from \factref{fact:sameap-seqleft} and \factref{fact:sameap-seqright}
that $\behof{\asys} \stutter_\syslab \behof{\asys'}$:
$\behof{\asys}$ and $\behof{\asys'}$ have the same set of paths,
up to finite stuttering on the side of $\asys'$.
\end{textAtEnd}

\section{Conclusions and Future Work}

We consider the parametric reachability problem for families of
networks of finite-state processes specified using \vrtext{} graph
grammars. Our approach is to reduce the problem to \hrtext{} grammars,
for which several verification techniques have been developed
lately. The reduction is based on a construction of an \hrtext{}
grammar that expands to the language of a given \vrtext{} grammar, by
elimination of epsilon-edges. We prove that this construction
preserves any reachability property expressed using first-order
arithmetic over variables that count the number of processes in each
state.

Future work involves an implementation of the proposed reduction,
based on existing tools for parametric verification of networks
specified using \hrtext{} grammars~\cite{arXiv2025}. In particular, we
plan to experiment with Azure-like network topologies and suitable
routing (existence of an available route) properties. We also plan to extend the decidability result for
pebble-passing systems with \hrtext{} grammars~\cite{arXiv2025} to
\vrtext{} grammars.

\subsubsection*{Acknowledgements.}
This research is supported by the French National Research Agency project "Parametic Verification of Dynamic Distributed Systems" (PaVeDyS) under grant number ANR-23-CE48-0005.
We also wish to thank the anonymous reviewers for their helpful suggestions.

\subsubsection*{Disclosure of interests.}
The authors have no competing interests to declare that are relevant to the content of this article.

\appendix
\section*{Appendix}
\printProofs[lemma2]

\bibliographystyle{abbrv}
\bibliography{refs}

\end{document}